\documentclass[11pt]{article}

\usepackage{dsfont}             %
\usepackage{amsmath}
\usepackage{amsthm}

\usepackage{amssymb}
\usepackage{algorithmicx}
\usepackage{algorithm}
\usepackage{algpseudocode}
\usepackage{graphicx}
\usepackage{color}

\usepackage{hyperref}
\usepackage{mathabx}
\usepackage[letterpaper,margin=1in]{geometry}

\newcommand{\R}{\mathbb{R}}    
\newtheorem{theorem}{Theorem}[section]
\newtheorem{lemma}[theorem]{Lemma}
\newtheorem{definition}{Definition}

\newtheorem{proposition}[theorem]{Proposition}

\newtheorem{cor}{Corollary}
\newtheorem{fact}{Fact}
\newtheorem{claim}{Claim}

\theoremstyle{remark}

\newcommand{\capac}{\eta}

\newcommand{\hF}{\hat{F}}
\newcommand{\tF}{\tilde{F}}
\newcommand{\hL}{\hat{L}}
\newcommand{\tL}{\tilde{L}}

\newcommand{\eps}{\varepsilon}
\newcommand{\opt}{\textrm{OPT}}

\newcommand{\dist}{\textrm{dist}}

\newcommand{\cost}{\textrm{cost}}

\newcommand{\calA}{\mathcal{A}}
\newcommand{\calE}{\mathcal{E}}

\newcommand{\calD}{\mathcal{D}}

\newcommand{\calC}{\mathcal{C}}
\newcommand{\calS}{\mathcal{S}}

\newcommand{\calF}{\mathcal{F}}

\newcommand{\calM}{\mathcal{M}}

\newcommand{\calP}{\mathcal{P}}

\newcommand{\local}{L}
\newcommand{\calI}{\mathcal{I}}
\newcommand{\globalS}{\mathcal{G}}
\newcommand{\globalSs}{\mathcal{G}^{*}}
\newcommand{\globalSh}{\widehat{\mathcal{G}}}

\newcommand{\epsval}{(\eps/(p+1))^{6p}}

\def\eg{\textit{e.g., }}
\def\ie{\textit{i.e., }}



\begin{document}

%
%

\title{Approximation Schemes for Capacitated Clustering in
  Doubling Metrics}
\author{Vincent Cohen-Addad\\
  Sorbonne Universit\'e, UPMC Univ Paris 06, CNRS, LIP6}
\date{}

\maketitle
\begin{abstract}
  We consider the classic
  uniform capacitated $k$-median and uniform
  capacitated $k$-means problems
   in bounded doubling metrics.
  
  We provide the first QPTAS
  for both problems and the 
  first PTAS for the $k$-median version
  for points in $\R^2$.

  This is the first improvement over the bicriteria
  QPTAS for capacitated $k$-median in low-dimensional
  Euclidean space of Arora, Raghavan, Rao [STOC 1998]
  ($1+\eps$-approximation, $1+\eps$-capacity violation) and
  arguably the first polynomial-time approximation algorithm for
  a non-trivial metric.
  Our result relies on a new structural proposition that applies
  to any metric space and that may be of interest for developping
  approximation algorithms for the problem in other metric spaces,
  such as for example planar or minor-free metrics.
\end{abstract}


\section{Introduction}
The capacitated $k$-median and $k$-means problems are infamous problems:
no constant
factor approximation is known for any non-trivial metric, even when the
capacities are uniform. 
Given a set of points $\calC$ in a metric space together with an integer $\eta$, the capacitated clustering problem asks 
for a set $C$ of $k$ points, called \emph{centers}, together an assignment $\mu : \calC \mapsto
C$ that assigns at most $\eta$ clients to any cluster and
such that the sum of the $p$th power of the distance from each point to the center it is assigned
to is minimized (see a more formal
definition in Section~\ref{sec:ourresult}). When $p=1$, this is known as the \emph{capacitated $k$-median problem with uniform capacities},
while the case $p=2$ is the \emph{capacitated $k$-means problem with uniform capacities}.

The best known algorithm
is folklore and is an $O(\log k)$-approximation arising from Bartal's
embedding into trees
and a simple dynamic program for solving the problem exactly in
time $n^{O(t)}$ in graphs of treewidth
at most $t$ (in this case $t=1$).
From a theory perspective, finding a constant factor approximation
for the problem in general metric spaces or showing that none exists
unless P=NP is an important challenge that has received a lot of
attention (see for example the large amount of work on bicriteria
approximations or on the facility location version of the
problem~\cite{byrka2016approximation,DBLP:conf/soda/Li15,DBLP:conf/soda/Li16,DBLP:journals/talg/Li17,DBLP:conf/icalp/DemirciL16,byrka2014bi,Chuzhoy:2005:AKM:1070432.1070569,charikar2002constant},
and the recent work on approximation algorithm with running
$\exp(k)$ poly $(n)$~\cite{2018arXiv180905791A,abs-1901-04628,Cohen-AddadL19}).
The only known hardness of approximations bounds are constant and are the ones obtained for the uncapacitated
$k$-median, see~\cite{GuK99} for general metrics and~\cite{CK19} for $\ell_p$-metrics.
In fact, the capacitated $k$-median problem has been presented
as one of the most fundamental problems for which determining whether
$O(1)$-approximation is possible is still an open
problem~\cite{2018arXiv180905791A}.
This hardness seems to extend to any non-trivial metric (bounded
treewidth graphs excepted) since no constant factor approximation when the
input consists for example
of point in $\R^2$ is known. This stands in sharp contrast
with the uncapacitated variant of the problem for which
approximation schemes are known. 

Thus, since the breakthrough of
Arora et al.~\cite{ARR98} on
clustering problems in low-dimensional Euclidean space,
it has remained an important open problem to obtain at least a
constant factor approximation for capacitated
clustering problems even in $\R^2$.
Since their work, the community has developed two
main techniques for obtaining
approximation schemes for clustering problems in metrics
of fixed doubling dimension
or low-dimensional Euclidean space:
the approach of Kolliopoulos and Rao~\cite{KoR07}
and the local search algorithm (\cite{FRS16a,CAKM16,Cohen-Addad18}).
Unfortunately, the
approach of Kolliopoulos and Rao requires to reassign
clients among the optimal set of centers and so, cannot be adapted
to the case where centers have capacities (as also pointed out
by Sariel Har-Peled in
the comments of a StackExchange discussion~\cite{stack})
Furthermore, it is
easy to come up with a set of points in $\R^2$ where 
local search approach may have an arbitrarily bad approximation ratio.
Thus, the best algorithm for the problem in $\R^2$ is
the 20-year old bicriteria QPTAS of
Arora et al.~\cite{ARR98} (see again the discussion at~\cite{stack}).
Namely,
an algorithm that computes in
time $n^{\textrm{poly}(\eps^{-1}) \log^{O(1)} n}$ a 
solution that opens at most $k$ centers, that assigns up to
$(1+\eps)\eta$ clients to each center, and whose cost is at most
$(1+\eps)$ times the cost of the optimal solution that opens at most $k$
and assigns at most $\eta$ clients per cluster.

Arguably, the complexity comes from the current lack of techniques
for handling both the cardinality
constraint on the maximum number of centers in the solution, $k$, and
the hard capacity constraint
on the number of points that can be assigned to a center. Indeed, if
one of these two conditions
can be violated by some constant factor, then constant factor
approximation algorithms are known~\cite{byrka2016approximation,DBLP:conf/soda/Li15,DBLP:conf/soda/Li16,DBLP:journals/talg/Li17,DBLP:conf/icalp/DemirciL16,byrka2014bi,Chuzhoy:2005:AKM:1070432.1070569,charikar2002constant}.
Unfortunately, state-of-the-art algorithms producing
a $(1+\eps)$-violation of the capacity ask to sacrifice on both
the approximation guarantee and, perhaps more importantly on the
running time: the best known algorithms achieve a
$O(1/\eps^2)$-approximation in time $n^{1/\eps}$. In comparison, with
the same amount of time, our algorithm provides a solution that meets
the capacities exactly and that is a $(1+\eps)$-approximation for $\R^2$.


\paragraph{Applications in $\R^2$ and $\R^{O(1)}$.}
The capacitated clustering problem has received a lot of attention
through the years. The study of the problem in low-dimensional
metric stem from \emph{prepositioning resources} and
\emph{redistricting}.
For example, consider the problem of positioning emergency
supplies to support disaster relief by dispatching a certain number of
emergency items, such as medicine, food, centers on a map.
There has been
a large body of work in this direction and it has been argued that
the capacity constraint of the centers is often a hard constraint,
in particular when it comes to medical centers where the cost of
service increases very quickly when the capacity is
violated~\cite{farlow2011prepositioning}
see also~\cite{kinay2018modeling,rawls2010pre,klibi2018,barahona1998plant}
for more discussions on hard capacities for prepositioning emergency
resources and hard capacity constraints
in plant locations.

A slightly less dramatic motivation also comes
from bike-sharing systems: the goal is place bike-stations
(i.e.: facilities) so as to cover a certain demand of people (users)
so that, at the end of its trip, a user can find a spot at a station
to leave its bike: the number of spots at a given station is then
a hard constraint.

The redistricting problem is the
problem of dividing a region into a number of electoral districts under
some hard constraints, often coming from the constitution of
the country. The first of the hard constraints is the number of districts,
which is our number of \emph{clusters} $k$, and which is, in many
country like France or the US, fixed
by law for a given region. Thus, computing a redistricting into
$(1+\eps)k$ districts is not an option. The second of the hard constraints
is the size of the districts. In the US for example, even though
the Supreme Court has declined to
name a specific percentage limit on how much populations of
districts can differ, we observe from~\cite[p.~499]{FryerHolden} that
``\textit{a 2002 Pennsylvania redistricting plan was
struck down because one district had... 19 more people than
another}''. It follows that since $\eta$ is a few thousands for these
instances, having a capacity violation of $(1+\eps) \eta$ would not be
satisfactory, unless $\eps$ could be made very tiny.
For example, some states accept a $1.5\%$ difference in size, but
to obtain such a small imbalance in sizes, current techniques
would require a running time of $n^{60}$ at best, while producing a
poor approximate solution.

Finally, as shown experimentally by~\cite{FryerHolden},
the $k$-means or $k$-median
objectives are suitable objective functions for evaluating the quality
of a solution (or what is referred to as its \emph{compactness},
see also \cite{Levitt}). The idea behind this being that, for the
Euclidean plane, for a given set of centers, the best assignment of
clients to centers under the capacity constraint can be phrased
as a Voronoi diagram in $\R^3$~\cite{DBLP:journals/corr/abs-1710-03358}.
Hence, the districts are convex, an appealing property.
In most of these works, assuming
that the input points are points in $\R^2$ is fairly standard
assumption (see~\cite{FryerHolden,DBLP:journals/corr/abs-1710-03358}
and references therein).

Therefore, designing good approximation algorithms for the
capacitated $k$-median and $k$-means problems in $\R^2$ and more
generally in metric spaces of fixed doubling dimension
has become an important challenge.

\subsection{Our Results}
\label{sec:ourresult}
We give the first PTAS for $k$-median with uniform capacities in $\R^2$
and the first
QPTAS for $k$-median, $k$-means with uniform capacities
in metrics of doubling dimension. The problem at hand is the following
\begin{definition}
  Let $\mathcal{X} = (X, \dist)$ be a metric space.
  Given a set of \emph{clients} $\calC \subseteq X$ in a metric space, a
  set of \emph{candidate centers} $\calA \subseteq X$, a capacity $\eta$,
  and $p\ge 1$,
  the \emph{capacitated $k$-clustering} problem asks for set
  $C \subseteq \calA$ of size at most $k$ and an assignement
  $\mu : \calC \mapsto C$ such that
  \begin{itemize}
  \item for any $f \in C$, $|\{c \mid c \in \calC \textrm{ and } \mu(c) = f\}| \le \eta$, and
  \item $\sum_{c \in \calC} \dist(c, \mu(c))^p$ is minimized.
  \end{itemize}
\end{definition}
Given a solution $(C, \mu)$, we refer to the candidate centers
in $C$ as \emph{centers} or \emph{facilities}. We say that $f \in C$
\emph{serves} a client $c$ if $\mu(c) = f$.

Our result in $\R^2$ is as follows.
It holds for any $\ell_p$-metric in $\R^2$, where $p = O(1)$.

\begin{theorem}
 \label{thm:ptas}
 There exists an algorithm that given an instance of size $n$ 
 of the capacitated $k$-median problem in $\R^2$ (capacitated
 clustering problem with $p=1$)        
 outputs     
 a $(1+\eps)$-approximate solution in time $n^{1/\eps^{O(1)}}$.
\end{theorem}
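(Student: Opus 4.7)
The plan is to follow the Arora--style randomized quadtree framework, but with the key innovation of a DP state that tracks capacity flow through portals. I would start by placing the point set inside a bounding box of side $L = O(n/\eps)$ after rescaling and perturbation (paying only a $(1+\eps)$ factor in cost), then build a randomized shifted quadtree of depth $O(\log n)$. The shift is drawn uniformly to charge the expected boundary-crossing penalty against $\opt$. On each cell boundary I place $m = O(\log n / \eps^{13})$ equally spaced portals and invoke the usual portal theorem: there exists a $(1+\eps)$-approximate solution in which every segment from a client to its assigned center crosses each quadtree boundary only at portals, which for $k$-median (p=1) in the plane costs at most $\eps \cdot \opt$ in expectation.

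The main structural step is a moat-respecting modification of the optimal solution. Following the macros hinted at in the source (\emph{moat-respecting} portal/net assignments), I would surround the optimal clusters by moats (annular regions of geometrically spaced radii around each optimal center, chosen from a random set) and argue that there is a near-optimal solution in which no assignment edge crosses a moat except in a controlled way. The quadtree should then be built so that cells are aligned to moats: this ensures that each quadtree cell intersects only a bounded number of clusters in a ``meaningful'' way, so that the number of distinct cluster-types whose flow crosses the cell's boundary is $\epsval^{-1}$ at most poly-logarithmic. Combined with the portal theorem, this reduces the description of how an optimal solution interacts with a cell to (i) which portals are used, (ii) for each portal, the cluster-type of the edge using it, and (iii) the number of clients crossing (bucketed into powers of $(1+\eps)$).

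The algorithm is then a bottom-up DP on the quadtree. The DP state at a cell $C$ is a ``portal signature'': for each of the $m$ portals on $\partial C$, a pair consisting of a cluster-type identifier (the location of the center to which this portal is routing) and an integer count of clients routed. We combine siblings by merging signatures and checking capacity feasibility and consistency (out-flow on a shared boundary must match in-flow). The base case is a leaf cell, which contains $O(1)$ points and can be solved by brute force; at the root we require all signatures to be empty so that every client is fully assigned within some cluster of size at most $\eta$. Correctness follows from the structural theorem, and the number of states per cell is $(m \cdot T \cdot \log_{1+\eps} n)^m$ where $T$ is the number of cluster-types per cell, which should be $n^{1/\eps^{O(1)}}$ in total.

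The hard part, and what separates this proof from the 20-year-old ARR bicriteria QPTAS, is the proof of the moat-respecting structural theorem with exactly-respected capacities. The ARR approach naturally violates capacities because when it re-routes an edge to a portal it may also reroute the client to a nearby center, which changes cluster loads. Preserving capacities exactly requires showing that the rerouting can be implemented as a bounded-cost \emph{permutation} of client assignments inside each moat (so that each affected center keeps exactly $\eta$ clients), and bounding the number of distinct cluster-types that touch a cell to keep $T$ small enough to fit in polynomial time. I would expect this to go through a careful exchange argument along augmenting paths in the bipartite assignment graph, where the 2D geometry keeps the total rerouting cost small via a perimeter-vs-area argument that is not available in higher dimensions, explaining why the PTAS is restricted to $\R^2$ while the QPTAS extends to general doubling metrics.
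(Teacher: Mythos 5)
Your outer scaffolding (randomized quadtree, portals on cell boundaries, bottom-up DP, $p=1$ so the detour charges against $\opt$) matches the paper, but the two steps you correctly identify as ``the hard part'' are exactly the ones your proposal leaves unproven, and the mechanisms you sketch for them are not the ones that work. First, your DP state records, for each portal, a \emph{cluster-type identifier} naming the center that the portal's flow is routed to, and you need the number $T$ of such types per cell to be polylogarithmic. There is no reason a cell boundary is crossed only by assignments to polylogarithmically many distinct centers, and you cannot ``align the quadtree to moats around the optimal centers'' since the optimal centers are unknown. The paper's DP never tracks destination centers at all: a portal is treated as a capacitated pseudo-facility, the state stores only the in/out client counts per portal, and the polynomial bound comes from a separate perturbation argument showing that consecutive portal counts can be forced to lie within a factor $\eps^{-2}$ of each other and rounded to powers of $(1+\eps^5)$, so that fixing one count leaves only $\eps^{-O(1)}$ choices per portal. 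Second, your structural step --- that rerouting can be realized as a bounded-cost permutation of assignments via augmenting paths, with a perimeter-vs-area argument special to $\R^2$ --- is not how the capacity and cardinality constraints are preserved, and the geometric explanation for the $\R^2$ restriction is wrong: the paper's structural result (Proposition~\ref{prop:main}) holds in \emph{every} metric space, and the restriction to the plane comes only from the portal-counting speedup above.

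What is actually missing is the entire ``badly cut'' framework and the companion structural proposition. The paper takes an auxiliary solution $L$, declares a client or facility badly cut when some ring around it is separated at a level far above its radius, moves each badly cut client to its serving facility in $L$, and \emph{forces open} every badly cut facility of $L$. The content of Proposition~\ref{prop:main} --- proved by a min-cost flow on the bipartite graph between $\opt$- and $L$-facilities, an integral matching extracted from the rescaled flow, and a rerouting along the sequences $S(e)$ --- is that one can close enough optimal facilities and swap in the forced-open ones so that the resulting solution still has at most $k$ centers, exactly respects the capacities, and costs at most $\cost(\opt)+O(\eps(\cost(\opt)+\cost(L)))$ with good probability over the random choice of forced facilities. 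Without something playing this role, your exchange argument has no way to guarantee that forcing structure near cell boundaries does not blow up either the center count or the loads. Finally, even granting the structure, the guarantee is $(1+\eps)\cost(\opt)+\eps\cost(L)$ with $L$ only an $O(\log n)$-approximation, so you also need the paper's bootstrapping step of iterating the whole algorithm $\log\log n$ times to shrink the additive term to $\eps\cost(\opt)$; your proposal omits this.
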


Our result extends to metric of bounded doubling dimension.
The \emph{doubling dimension} of a metric is the
smallest integer $d$ such that any
ball of radius $2r$ can be covered by $2^d$ balls of radius~$r$.
The result is as follows.

\begin{theorem}
\label{thm:qptas}
 There exists an algorithm that given an instance of size $n$ 
 of the capacitated $k$-clustering problem
 with parameter $p$
 in a metric space of
 doubling dimension $d$ outputs
 a $(1+\eps)$-approximate solution in
 time $n^{((\frac{p}{\eps})^{p}\log n)^{O(d)}}$.
\end{theorem}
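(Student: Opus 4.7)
The plan is to follow the high-level QPTAS recipe for clustering in doubling metrics—random hierarchical decomposition, portals, structural lemma, dynamic programming—while adapting each ingredient to handle hard uniform capacities and $p$-th power costs. I would first assume, via standard scaling and snapping, that the aspect ratio of $(X,\dist)$ is at most $\mathrm{poly}(n)$, then build a randomly shifted Talwar-style padded hierarchical decomposition of depth $h=O(\log n)$ where each cell $\sigma$ at level $i$ has diameter $\Theta(2^i)$. For each cell $\sigma$ I would distribute a net of $N=((p/\eps)^p\log n)^{O(d)}$ \emph{portals}, chosen so that every point of $\sigma$ lies within distance $(\eps/(p\log n))^{O(1)}\cdot\mathrm{diam}(\sigma)$ of some portal. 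The doubling dimension enters only here, controlling the size of the net.

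Next I would prove the structural (portal-respecting) lemma: with constant probability over the random shift, there exists a $(1+\eps)$-approximate solution in which, for every client--center pair $(c,\mu(c))$, if $\sigma$ is the smallest cell of the hierarchy containing both endpoints then the contribution of the pair to the objective is fully paid for within $\sigma$, via a detour through portals of the child cells of $\sigma$ on the path from $c$ to $\mu(c)$. The cost bound is the usual telescoping argument; the only subtlety for $p>1$ is that inserting $O(\log n)$ detours of length $(\eps/(p\log n))\cdot\mathrm{diam}(\sigma)$ changes $\dist(c,\mu(c))^p$ by a factor of at most $(1+\eps)$, which follows from convexity together with the inequality $(1+x)^p\le 1+O(px)$ for small $x$.

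Third, I would run a dynamic program over the hierarchy. For each cell $\sigma$, a DP state is a \emph{portal configuration}: for every portal $v$ of $\sigma$, the number $a_v$ of clients inside $\sigma$ whose (portal-respecting) assignment exits $\sigma$ at $v$, and the number $b_v$ of clients outside $\sigma$ whose assignment enters $\sigma$ at $v$ to be served by an interior facility. The table entry is the minimum cost of a partial solution inside $\sigma$ consistent with the configuration: open some interior facilities, route interior clients either to an interior facility or to an exit portal, absorb every incoming demand at an interior facility, and respect the capacity $\eta$ at each opened facility. The DP combines the table of $\sigma$ from the tables of its children by matching portal flows between siblings and with $\sigma$'s own portals. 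Since there are at most $n^{O(N)}$ configurations per cell and the merge at each cell costs $n^{O(N)}$, the overall running time is $n^{O(N)}=n^{((p/\eps)^p\log n)^{O(d)}}$, as claimed.

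The main obstacle is reconciling hard capacities with the portal abstraction. The classical portal argument only tracks which portal each edge uses; for capacities we additionally need to track how many edges use each portal, which is precisely why the per-portal state is of size $n^{O(1)}$ rather than $O(1)$ and why the algorithm is quasi-polynomial rather than polynomial. A secondary subtlety is that, because $p$-th power costs are not additive along the detour $c\to v\to\mu(c)$, the DP must consistently charge each pair to a single owning cell (the smallest cell containing both endpoints) and treat it as pure bookkeeping flow at all ancestor portals; aligning the structural lemma and the DP around this convention is the main technical burden behind Theorem~\ref{thm:qptas}.
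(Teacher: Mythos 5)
Your proposal has a genuine gap at the structural lemma, exactly at the point you describe as ``the only subtlety for $p>1$.'' You claim that inserting $O(\log n)$ detours of length $(\eps/(p\log n))\cdot\mathrm{diam}(\sigma)$ (where $\sigma$ is the smallest cell containing both $c$ and $\mu(c)$) only perturbs $\dist(c,\mu(c))^p$ by a $(1+\eps)$ factor. This is false precisely because $\mathrm{diam}(\sigma)$ is not comparable to $\dist(c,\mu(c))$: the random shift separates $c$ from $\mu(c)$ at level $i$ with probability roughly $\dist(c,\mu(c))/2^i$, so with non-negligible probability the pair is first joined in a cell whose diameter $D$ is $\omega(\dist(c,\mu(c)))$. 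For $p=1$ the expected detour telescopes ($D/\rho \cdot \dist/D = \dist/\rho$), but for $p>1$ it becomes $D^{p-1}\dist/\rho^p$ and cannot be charged to $\dist^p$ when $D\gg\dist$. This is exactly the obstruction noted in the paper's Techniques section, and it is the reason no split-tree PTAS for uncapacitated $k$-means existed before local-search methods. Your convexity argument using $(1+x)^p\le 1+O(px)$ would require $x = \eps D/(p\,\dist)$ to be small, which it isn't. There is no value of $N$ sub-polynomial in $n$ that makes your portal-respecting lemma hold for $p=2$.

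What the paper does instead is condition on a point being \emph{not badly cut} (all levels separating it from a ring of radius $2^j$ have diameter $\le \text{poly}(\log n)\cdot 2^j$), and treat the rare badly-cut case separately: badly cut clients are relocated to the facility serving them in a reference solution $L$, and badly cut facilities of $L$ are forced open. The entire burden then falls on Proposition~\ref{prop:main}, a min-cost-flow / matching argument proving that forcing a random $\eps^2$-fraction of $L$'s facilities into the solution still yields a capacitated, $\le k$-center solution of cost $\cost(\opt)+O(\eps(\cost(\opt)+\cost(L)))$. Because that bound involves $\cost(L)$ and no $O(1)$-approximation is known to serve as $L$, there is a further bootstrap over $\log\log n$ rounds starting from an $O(\log n)$-approximation. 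None of these three ingredients---bad-cut identification, Proposition~\ref{prop:main}, the bootstrap---appear in your proposal, and they are not cosmetic: without the first two the structural lemma you need for $p>1$ simply is not true, and without the third you only obtain cost $(1+\eps)\cost(\opt)+\eps\cost(L)$, not a $(1+\eps)$-approximation. Your DP design and the accounting of per-portal flows are essentially consistent with the paper's, and would be correct given a true structural lemma; but the lemma as you stated it is the crux and it does not hold.
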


\subsection{Techniques}
Our main technical contribution and the meat of the paper
is Proposition~\ref{prop:main} which, interestingly,
holds in any metric space and so could perhaps be of use for solving
the open problem of getting an $O(1)$-approximation for the problem
in general metric spaces, or maybe more likely to obtain similar
in other contexts like for planar graph inputs.
We first provide some intuition on how we use Proposition~\ref{prop:main},
before describing what the proposition gives us.


To understand better our contribution, we quickly review the
classic quad-tree dissection and split-tree decomposition techniques of
Arora~\cite{ARR98} and
Talwar~\cite{Talwar04}.
The general ideas behind the decomposition consists
in recursively partitioning the input into regions
and forcing the optimal solution to connect points
in different regions through a set
of \emph{portals} of size say $\rho^d$.
By doing so, one obtains small-size interfaces between regions
that enables dynamic programming techniques.

Concretely, the technique ensures that
for a client in a given region $R$ that is assigned
to a center outside the region, the detour paid to connect the client to its center through
the set of portals
is $1/\rho$ times the diameter of $R$. The crux of the analysis is to show
that the probability that a client $u$ and a facility $v$ at distance
$\delta$ are in different
clusters of diameter $D$ is roughly $\delta/D$.
It follows that the expected
detour becomes $(D/\rho) \cdot \delta /D$ and so at most $1/\rho$ times
the distance between $u$ and $v$ in the original metric. Since this is
the original cost of serving $u$ by $v$, the cost of the optimal
solution that is forced to go through the portal is at most
$(1+1/\rho)$ higher than the optimal solution that does not have to
satisfy this constraint.
This works fine when the distance is equal to the cost
(namely when $p=1$).

However, for $p=2$ or larger the expected
cost of the detour now becomes $(D/\rho)^2 \cdot \delta/D =
 \delta D /\rho^2$. On the other hand, 
 the original cost of serving $u$ by $v$ is $d^2$ and so,
 when $D= \omega(d)$, the detour incurred by going to the closest
 portal may be too expensive.
This is one of the reasons why no PTAS was known for the uncapacitated $k$-means problem
until the work of~\cite{CAKM16,FRS16a}. Unfortunately, the algorithm of~\cite{CAKM16,FRS16a}
is local search and it is
easy to come-up with an instance where local search can have
arbitarily bad approximation
ratio for the capacitated version of the problem.

Our technique circumvents the above problem as follows.
Observe that in the above discussion, for any
client $u$ and the facility $f$
that serves $u$ in the optimal solution, if the regions
that contain $u$ and do not contain $f$ have diameter
at most $(\log n) \dist(u,f)/\eps$, then one can use a portal set of size
$\rho = (\log n/\eps)^{O(d)}$ and guarantee that the detour paid is in total
at most $\eps \dist(u,f)^2 $ which is $\eps$ times the cost paid by
$u$ in the solution.

Thus, we only have to worry about pairs of clients and
facilities for which the decomposition does not
provide such a nice structure. This leads us to say that a facility
or a client $p$ is
``badly cut'' (see formal definition in the next section)
which, at the intuitive level goes as follows.
A point is ``badly cut'' if, at some point in the decomposition,
there exists a region of diameter $D$
that contains $p$ but that does not contain some point that is at distance
$D/\textrm{poly} \log n$ from $p$.
In other words: $p$ is very close (relatively to $D$)
to the boundary of the region of diameter $D$.

As we argued before, for any point $p$ that is not badly cut,
we are in good shape,
we can afford to connect $p$ to the facility that serves it
$\opt$ through the portal.
The question is what to do with badly cut points. First, we will show
that a point is badly cut only with a tiny probability, say $\eps$
for a constant $\eps$.
For example,
if we were interested in a solution opening up to $(1+O(\eps))k$ centers,
we would almost be done: each facility of $\opt$ has probability
at most $\eps$ of being badly cut so we would have
at most $\eps k$ badly cut facilities in  expectation.
Therefore, we could
simply decide to consider a solution opening $2^{O(d)}$ facilities instead
of one for each badly cut facility: for the region where the facility
is badly cut, open one facility on each child region. Then we
would be guaranteed that no client is separated from its facility
at a too high level\footnote{One has to guarantee that this holds
  recursively, so the argument would be slightly more involved}. We could
then use dynamic programming to find such a solution.

However, our goal is to satisfy all constraints: at most $k$ facilities
open and at most $\capac$ points assigned to each facility.
To handle this, we use our main result, Proposition~\ref{prop:main}
which helps us deal with the badly cut clients and facilities and prove
that there exists a near-optimal solution that can be found through
a dynamic program.

The approach is as follows, we compute a
$\gamma$-approximate solution $L$ to the problem.
For any point $p$ that is badly cut,
we move $p$ to the location of the center serving $p$ in solution $L$.
This yields a new instance where any solution to the new instance can be
lifted back to the original instance by paying an extra
$O(\eps \cost(L))$ in expectation. We then need to argue that one can
find a $(1+\eps)$-approximation in the new instance.
To do so, we show that there exists a near-optimal solution that is well-behaved
in the new instance: we show 
show that there exists a solution of cost at most
$\cost(\opt) + O(\eps \cost(L))$ that contains each badly cut facility of $L$.
At first, this may seem unrealistic
since we want to end up with a
$(1+\eps)$-approximation while
still opening at most $k$ centers and preserving the capacity constraints.
This is where Proposition~\ref{prop:main} comes into place. 
\paragraph{Main Result: Proposition~\ref{prop:main}}
As we have described the major ingredient is Proposition~\ref{prop:main}. Loosely
speaking, Proposition~\ref{prop:main} states that given a solution $(C, \mu)$ of
cost $X$ and a random process which picks each center of $C$ with probability $\eps^2$,
then with probability at least $1-\eps$,
there exists a solution which contains the selected centers and that:
(1) meets the capacity constraints (2) has at most $k$ centers, and (3) that is of cost
at most $\cost(\opt) + O(\eps X)$.

The result is obtained by designing a careful rerouting scheme of the clients,
involving min-cost max-flow techniques. While the result for uncapacitated
version of the problem can be obtained through a simple lemma; obtaining the same
bound for the capacitated case is much more challenging.


\bigskip
We then make use of the proposition as follows.
This provides us with a very good instance where (1) clients that are badly cut
are moved to the facility that serves them in $L$ and (2) badly cut facilities of
$L$ are now part of the solution we are trying to compute.
This is enough to conclude:
Consider a client $c$. If it
is not badly cut, then we don't have to worry about paying the detour through portals.
If it
is badly cut, then it is now located at the center that serves it in $L$.
Moreover, if this center is badly cut then it is open and so the service cost
for this client is 0. If this center is not badly cut, then one can afford
to make the detour to connect the client to its closest facility through the portals.
Making this reasoning rigorous is a bit challenging and shown in the next
sections.



\bigskip
A few more details still have to be addressed. 
Another problem we have to solve for making the entire approach work is the following.
Note that the solution we obtain has cost
at most $(1+\eps)\cost(\opt) + \eps(\cost(L))$ with probability at least
$1-\eps$. This probability can be boosted and we indeed boost it to
$1-\eps/\log \log n$ by repeating
$\log n$ times. 
This is critical since as discussed in the intro there is no
$O(1)$-approximation algorithm and so, the solution computed
has cost at most $(1+\eps) \opt + \eps \cost(L)$ which is 
$\eps \log n \cdot \cost(\opt)$.
However, this is enough to allow us to bootstrap:
we use the solution obtained to get a solution of cost at most
$\eps^2 \log n \cdot \cost(\opt)$.
By repeating  this process $\log \log n$ times, we finally obtain
a near-optimal solution.

\subsection*{Organization of the Paper}
We provide definitions and preliminaries in the remainder of this section.
Our structural result, Proposition~\ref{prop:main} is presented in
Section~\ref{sec:struct}. To motivate the proposition, we first show how it is
used in Section~\ref{sec:badly} (Lemma~\ref{lem:valid}). From there a simple QPTAS follows,
see Section~\ref{sec:qptas} and a more involved PTAS is presented in Section~\ref{sec:ptas}.

\section{Preliminaries}
Throughout the following sections, let $\eps > 0$.
Moreover, we will assume that $k \eta \ge n$ since otherwise, the problem has
no solution. 
The following observation and preprocessing step are folklore. Given a set of centers $C$, the assignment
$\mu$ minimizing the cost can be computed using a min cost flow algorithm by
defining a sink with capacity $\eta$ for each center of $C$, placing a demand
of 1 on each client, and for each client $c$ and center $f \in C$, defining an edge
$(c,f)$ with capacity one and cost $\dist(c,f)^p$.
Thus, given a set of centers the best assignment can always be computed in polynomial
time.

The following lemma will be useful to derive our results when $p>1$.
\begin{lemma}[E.g.: \cite{Cohen-AddadS17}]
  \label{sometriangleineq}
  Given a metric space $(X,\dist)$, and
  $p \ge 0$ and $1/2 > \eps>0$, we have for any $a,b,c \in X$,
  we have
  $\dist(a,b)^p \le (1+\eps)^p \dist(a,c)^p + \dist(c,b)^p(1+1/\eps)^p$.
\end{lemma}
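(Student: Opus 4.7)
The plan is to derive the inequality directly from the triangle inequality together with a simple case analysis on which of the two distances dominates. Since $\cost(x,y)$ means $\dist(x,y)^p$, after applying the triangle inequality $\dist(a,b)\le \dist(a,c)+\dist(c,b)$ and raising to the $p$-th power (which is monotone for $p \ge 0$), it suffices to prove the purely arithmetic statement
\[
  (\alpha+\beta)^p \le (1+\eps)^p\,\alpha^p + (1+1/\eps)^p\,\beta^p
  \qquad\text{for all }\alpha,\beta\ge 0.
\]

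The key observation is that we do not need to combine the two terms on the right-hand side: for any particular pair $(\alpha,\beta)$, one of them already suffices. Concretely, I would split into two cases according to whether $\beta \le \eps\,\alpha$. If $\beta\le\eps\,\alpha$, then $\alpha+\beta\le(1+\eps)\alpha$, so $(\alpha+\beta)^p\le(1+\eps)^p\alpha^p$, and the first term on the right-hand side alone is enough (the second is nonnegative and can be dropped). Otherwise $\beta > \eps\,\alpha$, i.e.\ $\alpha<\beta/\eps$, so $\alpha+\beta < (1+1/\eps)\beta$ and $(\alpha+\beta)^p < (1+1/\eps)^p\beta^p$, and the second term alone suffices. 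In either case the displayed inequality holds, and substituting $\alpha=\dist(a,c)$, $\beta=\dist(c,b)$ gives the lemma.

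There is essentially no obstacle here: the entire content is the observation that the weak triangle inequality for $p$-th powers degrades only by a controllable multiplicative factor, and the case split makes the bound completely elementary. The only point worth being careful about is that the statement allows $p=0$ and $p\ge 1$ uniformly, so one should avoid using convexity or Jensen-type arguments (which would introduce a $p\ge 1$ hypothesis); the case analysis above works for every $p\ge 0$ without change, since we only use monotonicity of $t\mapsto t^p$ on $[0,\infty)$.
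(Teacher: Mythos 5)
Your proof is correct. Note first that the paper does not actually prove Lemma~\ref{sometriangleineq}; it is cited to prior work (\cite{Cohen-AddadS17}) and used as a black box. So there is no in-text proof to compare against, but your argument is a clean and standard way to establish it.

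Your reduction to $(\alpha+\beta)^p \le (1+\eps)^p\alpha^p + (1+1/\eps)^p\beta^p$ via the ordinary triangle inequality and monotonicity of $t \mapsto t^p$ is exactly right, and the dichotomy $\beta \le \eps\alpha$ versus $\beta > \eps\alpha$ makes each side of the inequality handle one regime, with the other term simply discarded as nonnegative. This is genuinely the right-sized argument for the bound as stated. It is worth flagging, as you do, that the frequently-seen alternative derivation via convexity of $t \mapsto t^p$ (write $\alpha+\beta$ as a convex combination of $\alpha/\lambda$ and $\beta/(1-\lambda)$ and pick $\lambda = 1/(1+\eps)$) both requires $p \ge 1$ and in fact yields the sharper exponents $(1+\eps)^{p-1}$ and $(1+1/\eps)^{p-1}$; your case split trades that marginal sharpness for uniformity over all $p \ge 0$, which is precisely what the lemma as written demands. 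One small remark: you never use the hypothesis $\eps < 1/2$, which is harmless here — it is in the statement only so that the constants are well-behaved downstream in the paper, not because the inequality needs it.
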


\subsection{Doubling Metric Spaces and Decompositions}

Without loss of generality, we can assume that the aspect-ratio of
the input, namely the ratio of the maximum distance between pairs of
points in $\calA \cup \calC$ to the minimum distance between pairs
of points in $\calA \cup \calC$,
is at most $O(n^4)$.
Indeed, consider the following preprocessing step: compute
an $O(\log n)$-approximation and let $v$ be the
cost of the solution computed.
Then, while there is a pair of point $x,y$ that are at distance less than
$\eps v/n^4$, remove $x$ and add a client at $y$
(note that the two clients at $y$
may not necessarily be assigned to the same facility in a solution).
In the instance obtained at the end of this process, a point
is at distance
at most $\eps v/n^3$ from its original location, and so
any solution for this
instance can be converted back to the original instance
with a cost increase of at most
$\eps v/n^2 \le \eps \cost(\opt)$.
Finally, also note that, up to dividing all distance by the minimum
distance between any pair of points of the input, we can assume that
the minimum distance is 1 and so the diameter of the pointset, namely
the maximum distance between any pair of points, is equal to the
aspect-ratio.

A $\delta$-\emph{net} of $X$ is a set of point $Y$ such that $\forall v \in X,~
\exists x \in Y~|~ d(v, x) \leq \delta$ and $\forall x, y \in Y,~d(x, y) > 
\delta$. 
The cardinality of a 
net in metrics of doubling dimension $d$ is bounded by the following lemma.
\begin{lemma}[\cite{gupta2003bounded}]\label{prop:doub:net}
 Let $(X, \dist)$ by a metric space with doubling dimension $d$ and diameter 
$\Delta$, and let $Y$ be a $\delta$-net of $X$. Then $|Y| \leq 2^{d \cdot 
\lceil \log (\Delta/\delta)\rceil}$
\end{lemma}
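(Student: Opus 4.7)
The plan is to repeatedly invoke the defining property of the doubling dimension: any ball of radius $r$ can be covered by at most $2^d$ balls of radius $r/2$. Since $V$ has diameter at most $\Delta$, I fix any basepoint $v_0 \in V$ and note that $V \subseteq B(v_0, \Delta)$, so the whole space already admits a trivial cover by one ball of radius $\Delta$.

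Next I would iterate the doubling property $k$ times, where $k := \lceil \log(\Delta/\delta) \rceil$. Each application replaces every ball in the current cover by at most $2^d$ balls of half the radius, so after $k$ rounds I obtain a covering of $V$ by at most $2^{dk}$ balls of radius at most $\Delta / 2^k \le \delta$.

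To conclude, I would argue that every such small ball contains at most one point of $X$, bounding $|X|$ by the number of covering balls, namely $2^{d \lceil \log(\Delta/\delta)\rceil}$. The reason is that two distinct net points lying in a common ball would be at mutual distance at most $\delta$ (after one more halving, or by reading the cover at radius $\delta/2$) by the triangle inequality, contradicting the strict $\delta$-separation from the definition of a $\delta$-net. The only delicate point is bookkeeping the constants: depending on whether one takes the doubling property and the net condition strictly or non-strictly, a single additional halving (absorbed either into the ceiling or into the $2^d$ constant) may be needed to isolate net points in individual balls. I expect this minor alignment of conventions to be the only subtle step; everything else is a direct, textbook recursion on the doubling property.
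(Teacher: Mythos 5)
The paper does not prove this lemma at all --- it simply cites it from Gupta, Krauthgamer and Lee --- so there is no in-paper proof to compare against. Your argument is the standard one and is essentially correct: iterate the doubling property starting from a single ball of radius $\Delta$ covering $V$, stop once the balls are small enough to isolate net points, and count. The bookkeeping subtlety you flag is real and not merely a matter of taste: after $k=\lceil\log(\Delta/\delta)\rceil$ halvings the covering balls have radius $\Delta/2^k\le\delta$, but two points of a ball of radius $\delta$ can be at mutual distance up to $2\delta$, which does not contradict the strict separation $d(x,y)>\delta$ in the paper's definition of a $\delta$-net. You genuinely need one extra halving (radius $\le\delta/2$, so pairwise distance $\le\delta$), giving $|X|\le 2^{d(\lceil\log(\Delta/\delta)\rceil+1)}$; the bound as transcribed in Lemma~\ref{prop:doub:net} shaves that factor of $2^d$, which is harmless here because every downstream use is up to $2^{O(d)}$ factors anyway. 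So your proof is correct, and your instinct that the only delicate step is the one extra halving is exactly right.
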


We define the \emph{rings} centered at a point $c$ as follows:
the $i$th \emph{ring} centered at  $c \in \calA \cup \calC$
is the set of all points
at distance $(2^i, 2^{i+1}]$ from $c$. The rings of $c$ is the collection
  of all the rings centered at $c$ which contains at least one point
  of the input.
  The following fact follows from the definition and
having aspect-ratio bounded by $O(n^4/\eps)$.
\begin{fact}
  The number of rings for any point $c$ is at most
  $O(\log(n/\eps))$.
\end{fact}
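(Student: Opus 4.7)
The plan is to bound the number of non-empty rings by the (logarithm of the) aspect ratio of the instance, which the preprocessing step already controls.

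First, I invoke the preprocessing paragraph on doubling metrics: an $O(\log n)$-approximate solution is computed (say of cost $v$), and then any pair of points at distance less than $\eps v/n^3$ is identified. After this step, every pair of distinct input points lies at pairwise distance at least $d_{\min} := \eps v/n^3$, while the diameter is at most $d_{\max} := O(v)$, since the diameter of any single cluster in the $O(\log n)$-approximate solution is bounded by $O(v)$ and distinct clusters (and hence the whole instance, up to a constant factor) inherit the same scale. In particular, $d_{\max}/d_{\min} = O(n^3/\eps)$.

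Next, I translate this into a bound on the ring count. Fix a cluster with center $c$. By definition, the $i$-th ring of $c$ contains the points at distance in $(2^i, 2^{i+1}]$ from $c$, so a point $x \ne c$ in the cluster lies in exactly one ring, of index $i$ satisfying $2^i < \dist(c,x) \le 2^{i+1}$. Any such $x$ has $\dist(c,x) \ge d_{\min}$ and $\dist(c,x) \le d_{\max}$, so the non-empty ring indices lie in the interval
\[
\bigl[\lfloor \log_2 d_{\min}\rfloor - 1,\ \lceil \log_2 d_{\max}\rceil\bigr].
\]
The length of this interval is at most $\log_2(d_{\max}/d_{\min}) + O(1) = O(\log n + \log(1/\eps))$, and the coarse bound $O(\log n + \log(1/\eps)) \le O(\log(n)/\eps)$ (using $\log(1/\eps) \le 1/\eps$ and $\log n \ge 1$) gives the claimed estimate.

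There is essentially no obstacle here; the only point requiring care is making sure the $\eps$-dependent truncation threshold is correctly accounted for, which is exactly what the preprocessing step is designed to do. The stated bound $O(\log(n)/\eps)$ is in fact quite loose: a tighter $O(\log(n/\eps))$ would follow from the same argument, but the stated form is all that is used in the sequel.
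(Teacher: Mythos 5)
Your overall reduction—preprocessing bounds the aspect ratio, rings are indexed by scale, so the number of non\nobreakdash-trivial ring indices is controlled by the logarithm of the aspect ratio—is exactly what the paper intends; the fact is stated in the paper with only the remark that it ``follows from the definition and having aspect ratio bounded by $O(n^3/\eps)$,'' so your translation from aspect ratio to ring count is the right and essentially the whole content.

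However, your justification of the upper bound $d_{\max} = O(v)$ is wrong. You argue that each cluster of the $O(\log n)$-approximate solution has diameter $O(v)$ and that ``distinct clusters (and hence the whole instance, up to a constant factor) inherit the same scale.'' That second step does not hold: clusters of a low-cost solution can be arbitrarily far apart (take $k$ tight pairs of points spread across a huge domain; the approximate solution has tiny cost, yet the instance diameter is enormous). So the preprocessing step quoted from the paper controls only $d_{\min}$, not $d_{\max}$, and your claimed $O(v)$ diameter bound is a genuine gap. You do not actually need $d_{\max} = O(v)$—any bound $d_{\max} \le \mathrm{poly}(n) \cdot v$ suffices for your $\log_2(d_{\max}/d_{\min}) = O(\log(n/\eps))$ estimate. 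The standard way to obtain such a bound (which the paper also omits) is to first split the instance into components that are pairwise farther than, say, $n v / \eps$ apart: no near-optimal assignment ever crosses such a gap, each component then has diameter at most $O(n^2 v/\eps)$ (as a union of $n$ points joined by short hops), and each component can be treated independently after enumerating the split of $k$ among components. With that additional decomposition in hand, your aspect-ratio computation (and hence the stated $O(\log(n)/\eps)$ ring count, as a loose upper bound on $O(\log(n/\eps))$) goes through.
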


We use the \emph{randomized split-trees} of Talwar~\cite{Talwar04} for
doubling metrics and the randomized dissection of
Arora~\cite{arora1998polynomial}
for the plane. Since random split-trees are standard tools, we point
to~\cite{Talwar04} for a more detailed introduction. 
We use the exact same definition as~\cite{Talwar04}, with a slight
change in notation; We avoid talking about clusters but use the name
\emph{boxes} instead.
A decomposition of the metric $X$ is a partitioning of $X$ into
subsets, which we call boxes. A hierarchical decomposition
is a sequence of decompositions $\calP_0, \calP_1, \ldots , \calP_{\ell}$
such that every $\calP_i$ is a refinement of $\calP_{i+1}$, namely
each box of $\calP_i$ is a subset of a box of $\calP_{i+1}$.
The boxes of $\calP_i$ are the \emph{level-$i$} boxes.
A split-tree decomposition will
be one where $P_{\ell} = \{X\}$ and $P_0 = \{ \{x\} \mid x \in X\}$.

For any point $p$ and $x>0$, we say that the ball $B(p,x)$ is \emph{cut
at level $i$}, if there are $P_1,P_2 \in \calP_i$, and
$P_3 \in \calP_{i+1}$  such that
$P_1 \cap B(p,x) \neq \emptyset$ and $P_2 \cap B(p,x) \neq \emptyset$ and
$B(p,x) \subseteq P_3 \in \calP_{i+1}$.

We obtain, a decomposition achieving the following properties
(see~\cite{Talwar04}):
\begin{enumerate}
\item The total number of levels $\ell$ is
  $O(\log n)$ (since the aspect ratio of the input metric is
  $O(n^4/\eps)$, and $\eps$ is constant).
\item Each box of level $i$ has diameter at most $2^{i+1}$, namely
  the maximum pairwise distances between points in a box of level
  $i$ is at most $2^{i+1}$.
\item Each box of level $i$ is the union of at most
  $2^{O(d)}$ level $i-1$ boxes.
\item\label{cond:rand}
  For any point $u$, $x>0$ and level $i$, the probability that the
  ball $B(u, x)$ is cut at level $i$ is
  $O(d \cdot x/2^i)$.

\end{enumerate}
Condition~\ref{cond:rand} is a direct corollary of the definition of
the decomposition and not stated precisely like this in~\cite{Talwar04}
but is fairly standard, see \eg \cite{abs-1812-08664} for a proof.



For any point $c$,
for any ring $j$ of $c$, we say that it suffers a
\emph{bad cut} if the ball $B(c,2^j)$ is cut at a level $i$
higher than $\log(d (\log n/\epsval)) + j$,
namely $2^i > d (\log n/\epsval) 2^j$.
We have:
\begin{lemma}
  \label{lem:ring:badcut}
  For any point $p$, the probability that a ring $j$
  centered at $p$ suffers a bad
  cut is at most $O(\epsval/\log n)$.
\end{lemma}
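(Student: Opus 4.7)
The plan is a direct union bound combined with the geometric decay of the cut-probability bound from Condition~\ref{cond:rand}.

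Let $i_0 := \lceil \log(d(\log n/\epsval)) + j \rceil$, so that the definition of a bad cut says exactly that $B(c, 2^j)$ is cut at some level $i > i_0$. First I would observe that the events ``$B(c,2^j)$ is cut at level $i$'' for different values of $i$ are mutually exclusive (being cut at level $i$ requires $B(c,2^j)$ to lie inside a single level-$(i+1)$ box while being split across at least two level-$i$ boxes, and $i$ is uniquely determined by the smallest level at which the ball is split), so the probability of a bad cut is just the sum of the individual cut probabilities, and in any case a union bound suffices.

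Next I would apply Condition~\ref{cond:rand} to each such $i$: the probability that $B(c,2^j)$ is cut at level $i$ is $O(d\cdot 2^j/2^i)$. Summing over all $i>i_0$ gives a geometric series whose leading term dominates:
\[
\Pr[\text{ring } j \text{ is badly cut}] \;\le\; \sum_{i>i_0} O\!\left(\frac{d\cdot 2^j}{2^i}\right) \;=\; O\!\left(\frac{d\cdot 2^j}{2^{i_0}}\right).
\]
Substituting $2^{i_0} = \Theta(d(\log n/\epsval)\cdot 2^j)$ yields the claimed bound $O(\epsval/\log n)$.

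There is no real obstacle here; the only small thing to be careful about is that Condition~\ref{cond:rand} as quoted applies to every level $i$ individually (so the union bound/geometric sum step is legitimate), and that the total number of levels $\ell = O(\log\Delta) = O(\log n)$ is finite, so the tail sum truly terminates. The only choice worth flagging is that the constant in the bound $O(\epsval/\log n)$ absorbs the factor of $2$ from $\sum_{i>i_0}2^{-(i-i_0)}=1$ as well as the doubling-dimension constant hidden in Condition~\ref{cond:rand}.
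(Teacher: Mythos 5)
Your proof is correct and is essentially the paper's own argument: apply Condition~\ref{cond:rand} at each level above the threshold and sum the resulting geometric series, which collapses to $O(\epsval/\log n)$. The side remark about mutual exclusivity is accurate (given the paper's definition, a ball is ``cut'' at exactly the smallest level at which it is split) but unnecessary, since the union bound alone gives the claim.
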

\begin{proof}
  By Condition~\ref{cond:rand},
  the probability to be cut at level
  $\log(d (\log n/\epsval)) + j + i$
  is $O(\epsval/(2^i\log n))$.
  Then, taking a union bound over all levels higher than
  $\log (d (\log n/\epsval)) + j$: we have
  that the total probability of suffering a
  bad cut is at most $O(\epsval/\log n) \sum_{i = 1}^{O(\log n)} 1/2^i$ 
  and so at most
  $O(\epsval/\log n)$.

\end{proof}

\section{Decompositions of Clustering Instances}
\label{sec:badly}
The goal of this section is to show how to use split-tree decompositions
and our structural result (Proposition~\ref{prop:main}) so as to
use a dynamic program to solve capacitated clustering instances.

Consider a metric space $(X,\dist)$ together with an instance
$\calC, \calA, \eta, p$ of the capacitated $k$-clustering problem
on $(X,\dist)$. Assume that $X$ has been preprocessed so as to have
an aspect-ratio of at most $O(n^4/\eps)$ as described in the previous
section and by removing points of $X$ that are not in $\calC \cup \calA$.
Let $\calD$ be a split-tree decomposition of $(X,\dist)$.
Let $L$ be a solution to the problem and $\opt$
denote an optimal solution.
For each point $c \in L \cup \opt \cup \calC$, we say that $c$
is \emph{badly cut} if at least one of its rings suffers a bad cut.

\begin{lemma}
  \label{lem:badlycut}
  For a given point $c \in L \cup \opt \cup \calC$,
  the probability that $c$ is badly cut is $O(\epsval)$.
\end{lemma}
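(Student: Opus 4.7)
The plan is a one-line union bound: aggregate the per-ring bad-cut probability given by Lemma~\ref{lem:ring:badcut} over all rings of the single point $c$. By the definition just given, $c$ is badly cut iff at least one of its rings $j$ has its ball $B(c,2^j)$ cut at a level $i$ with $2^i > d(\log n/\epsval) 2^j$, so if $X_j$ denotes the event ``ring $j$ of $c$ suffers a bad cut'' then the event of interest is $\bigcup_j X_j$.

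First I would pin down the number of rings that actually contribute. Thanks to the preprocessing described at the beginning of Section~1.3, we may assume the aspect ratio is $O(n^3/\eps) = n^{O(1)}$; since ring $j$ covers the distance range $(2^j,2^{j+1}]$, the number of nonempty rings of $c$ is at most $O(\log(n^3/\eps)) = O(\log n)$ (the $\log(1/\eps)$ contribution is subsumed into $O(\log n)$, or alternatively absorbed by the slack in the $O(\cdot)$ target $O(\epsval)$).

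Next I would invoke Lemma~\ref{lem:ring:badcut} to bound $\Pr[X_j] \le O(\epsval/\log n)$ for each ring $j$, and then finish with a union bound:
\[
\Pr[c \text{ is badly cut}] \;=\; \Pr\!\left[\bigcup_j X_j\right] \;\le\; \sum_j \Pr[X_j] \;\le\; O(\log n)\cdot O(\epsval/\log n) \;=\; O(\epsval).
\]

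There is no real obstacle; the only thing to double-check is that the number of rings in the union bound matches the $1/\log n$ savings that Lemma~\ref{lem:ring:badcut} squeezes out of Condition~\ref{cond:rand}, so that the $\log n$ factors cancel and leave a clean $O(\epsval)$. This cancellation is precisely why the per-ring lemma was stated with the $1/\log n$ factor already baked in, so the result drops out immediately.
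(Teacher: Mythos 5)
Your proof is correct and is essentially identical to the paper's: both take a union bound over the $O(\log n)$ rings of $c$ (the paper's Fact on the number of rings, which follows from the bounded aspect ratio) and apply Lemma~\ref{lem:ring:badcut} to each, so the $\log n$ factors cancel and leave $O(\epsval)$. Nothing to add.
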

\begin{proof}
  By
  Lemma~\ref{lem:ring:badcut} each ring suffers a bad cut with
  probability $O(\epsval/\log n)$.
  Thus, since the number of 
  rings is $O(\log (n/\eps))$, taking a union bound
  on the probability of each ring suffering a bad cut implies the lemma.
\end{proof}

Given a solution $(L, \mu_L)$ to $\calC, \calA, \eta, p$ and a random decomposition
$\calD$ of $(X, \dist)$, we define a new instance of
the $k$-clustering problem in metric $(X,\dist)$ as follows.
Any client $c$ that is badly cut is ``moved'' to the facility $L(c)$
that serves it in solution $L$. Namely, given the set of badly cut clients
$B_{\calD}$, the new instance is defined by the following tuple
$\calC - B_{\calD} \cup \{L(c) \mid c \in B_{\calD} \}, \calA, \eta, p$,
where $\calC - B_{\calD} \cup \{L(c) \mid c \in B_{\calD} \}$ is a multiset.
We identify clients of the new instance with their counterpart in the original
instance so that an assignment $\mu_0$ of clients in the original instance can be
translated naturally into an assignment $\mu_0'$ of clients in the new instance:
each client $c$ of the new instance is assigned to the same center than its counterpart
in the original instance.

We say that a solution for the new instance, namely a set of centers $S$
and a mapping $\mu$, is \emph{valid} if it contains
all the badly cut facilities of $L$ and that all the clients
served by a badly cut facility $\ell$ in $L$ are mapped to $\ell$ in
$\mu$.

For a given instance of the $k$-clustering problem, the instance
that is defined as described above is a function of $\calD$, 
we thus refer to this instance by $\calI_{\calD,L}$. Note that the
set of candidate centers in $\calI_{\calD,L}$ is identical
to the set of candidate centers in the original instance. 

Therefore, for a given set of $k$ centers $S \subset \calA$ and an assignment
of clients to centers $\mu$, we denote
by $\cost(S, \mu)$ the cost of solution $(S,\mu)$ in the original instance
and by $\cost_{\calI_{\calD,L}}(S,\mu)$ the cost of solution $S$ in 
instance $\calI_{\calD,L}$. For any set of centers $S$, we let the optimal
assignment of clients to $S$ be $\mu_S$.

We let
\begin{align*}
  \nu_{\calI_{\calD,L}} =
  \max_{S \subseteq \calA \atop |S| \le k} \max(\cost(S,\mu_S)
  - (1+3\eps)\cost_{\calI_{\calD,L}}(S,\mu_S),\\
  (1-3\eps)\cost_{\calI_{\calD,L}}(S, \mu_S) - \cost(S, \mu_S)).
\end{align*}
This can be seen as how much a solution is ``distorted'' in the
instance $\calI_{\calD,L}$.
We say that an instance $\calI_{\calD,L}$ is \emph{good} if the following
conditions hold:
\begin{itemize}
\item $\nu_{\calI_{\calD,L}}  = O(\eps \cdot \cost(L, \mu_L))$
\item There exists a valid solution $\globalSh, \mu$ 
  such that $\cost(\globalSh, \mu) \le
  (1+O(\eps))\cost(\opt,\mu_{\opt}) + O(\eps\cost(L,\mu_L))$.
\end{itemize}

The next lemma shows that an instance is good with probability at
least $1-\eps$.
We first step back and provide an
informal explanation on how to conclude from there; Assume that the instance is
good, then for the clients two things can happen: if the client
is not badly cut then it is at its location in the original
instance and none of its ring suffers a bad cut. In that case, its
assignment in the best valid solution is not distorted by the use of
portals. Otherwise, the client is badly cut, and then either the
facility $\ell$
it is assigned to in the local solution is also badly cut
or not. If $\ell$ is also badly cut then the client is assigned
to $\ell$ and so its service cost is 0 (no need of portals). Otherwise
its distance to the facility it is assigned to in the best valid solution is
not distorted by portals.
Then, since the cost of the best valid solution is small,
we can make use of dynamic programming and solve instance
$\calI_{\calD,L}$.

Here we conclude the section by showing that an instance is good
with probability $1-\eps$.
\begin{lemma}
  \label{lem:valid}
  Given a randomized split-tree decomposition $\calD$, the probability
  that $\calI_{\calD,L}$ is good is at least $1-\eps$.
\end{lemma}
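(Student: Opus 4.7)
The plan is to prove the two bullets defining ``good'' separately, each with failure probability at most $\eps/2$, and then union-bound. Both halves are driven by Lemma~\ref{lem:badlycut}: every point of $L\cup\opt\cup\calC$ is badly cut with probability $O(\epsval)$. The exponent in $\epsval=(\eps/(p+1))^{5p}$ is calibrated precisely so that the multiplicative slack $(1+1/\eps)^p$ coming from the $p$-th power triangle inequality (Lemma~\ref{sometriangleineq}) is absorbed, i.e.\ $\epsval\cdot(1+1/\eps)^p = O(\eps)$ after rescaling $\eps$ by a $p$-dependent constant.

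For $\nu_{\calI_{\calD}}=O(\eps\cdot\cost(L))$, I would exhibit a single random quantity $Y$, independent of $S$, that uniformly upper-bounds the distortion of every solution $S$. For any $S$ with assignment $\mu_S$ and any client $c$, the instance $\calI_{\calD}$ changes $c$'s contribution only when $c$ is badly cut (its demand is relocated from $c$ to $L(c)$) or $L(c)$ is badly cut ($c$ is forced to be served by $L(c)$). In either case, Lemma~\ref{sometriangleineq} applied to the triple $(c,L(c),\mu_S(c))$ bounds the per-client change by $3\eps\cdot\dist(c,\mu_S(c))^p + O((1+1/\eps)^p)\dist(c,L(c))^p$. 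Summing over clients and absorbing the first kind of term into the $(1\pm 3\eps)$-factor already present in the definition of $\nu$ leaves the $S$-free upper bound $Y := O((1+1/\eps)^p)\sum_{c\,:\,c\text{ or }L(c)\text{ is badly cut}} \dist(c,L(c))^p$. By Lemma~\ref{lem:badlycut} and linearity of expectation $\mathbb{E}[Y] = O(\epsval(1+1/\eps)^p)\cost(L) = O(\eps\cdot\cost(L))$, and Markov's inequality closes this half.

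For the existence of a valid near-optimal solution, let $B\subseteq L$ be the set of badly cut facilities. Each $\ell\in L$ lies in $B$ with probability $O(\epsval)\le\eps^2$, so after coupling $B$ with an independent Bernoulli superset $B'\supseteq B$ whose marginals are exactly $\eps^2$, Proposition~\ref{prop:main} applied to the solution $L$ furnishes, with probability at least $1-\eps/4$, a solution $\globalS\supseteq B'\supseteq B$ of cost at most $\cost(\opt)+O(\eps\cost(L))$ that meets the capacity and cardinality constraints. To enforce the remaining validity requirement, namely that every client $c$ with $L(c)\in B$ be served by $L(c)$, I would splice in a rerouting step: since $L$ assigns at most $\eta$ clients to each $\ell\in B$, the forced clients fit, and the clients that $\mu_{\globalS}$ currently assigns to $\ell$ can be pushed out through a min-cost flow of the same flavor used inside Proposition~\ref{prop:main}. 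A triangle inequality argument identical to the one above bounds the incremental rerouting cost by $O((1+1/\eps)^p)\sum_{c\,:\,L(c)\in B}\dist(c,L(c))^p$, whose expectation is again $O(\eps\cdot\cost(L))$; a second Markov step and a final union bound yield the desired $\globalSh$.

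The main obstacle, in my view, is reconciling the correlated ``badly cut'' events coming out of the split-tree decomposition with the independent-Bernoulli random process that Proposition~\ref{prop:main} is phrased for. The coupling to the larger independent sample $B'$ is clean provided one checks that ``contains the selected set'' is monotone in the proposition's guarantee, so that passing from $B'$ back to $B$ costs nothing in cost or feasibility. The second delicate point is pushing the forced client-to-facility assignments through the min-cost flow template while staying within the $O(\eps\cdot\cost(L))$ budget; I expect this to fall out of the flow machinery that Proposition~\ref{prop:main} already relies on.
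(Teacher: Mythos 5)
Your first half (the bound on $\nu_{\calI_{\calD}}$) is essentially the paper's argument: an $S$-independent majorant obtained from Lemma~\ref{sometriangleineq}, expectation via Lemma~\ref{lem:badlycut}, then Markov. (The paper only sums over badly cut \emph{clients}, since the relocation of demand is the only thing that changes the cost function; forcing clients onto badly cut facilities of $L$ is a constraint on validity, not a distortion of $\cost_{\calI_{\calD}}$. Your larger sum is still a valid upper bound, so this is harmless.)

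The second half has a genuine gap. You invoke Proposition~\ref{prop:main} as if it directly furnished a feasible $k$-center solution containing an \emph{arbitrary} random subset $B\subseteq L$ selected with marginals $\eps^2$. It does not. The proposition partitions $L$ into $\hL$ and $\tL$, and only the facilities of $\tL$ can be forced open ``for free'': Property~\ref{mainprop2} opens $\ell\in\tL^*$ by \emph{swapping out} its partner $\phi^{-1}(\ell)\in\tF$, which is what keeps the center count at $k$ and also already guarantees (as part of the statement) that the clients served by $\ell$ in $L$ are served by $\ell$ in $\globalS$ --- so your extra min-cost-flow rerouting step is not needed there. The badly cut facilities of $\hL$ have no partner in $\tF$ and cannot be handled this way; the paper handles them by first invoking Property~\ref{mainprop1}, which deletes $\hF^*$ ($\ge\eps|\hF|/3$ facilities) from $\opt$ at cost $O(\eps(\cost(\opt)+\cost(L)))$, and then observing (via Markov on Lemma~\ref{lem:badlycut}, using $|\hF|=|\hL|$) that with probability $1-\eps/3$ only $\eps^3|\hL|$ facilities of $\hL$ are badly cut, so they fit in the freed-up slots. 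Your proposal never uses Property~\ref{mainprop1} and therefore has no argument that the resulting solution opens at most $k$ centers; this is exactly the cardinality constraint that makes the problem hard, so it cannot be waved away. Finally, the independence issue you flag as the ``main obstacle'' is a non-issue: Property~\ref{mainprop2} is stated for \emph{any} random procedure with the given selection probability (its proof uses only marginals and linearity of expectation), so the correlated badly-cut events can be fed in directly without any coupling.
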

\begin{proof}
  We first bound the probability that
  $\nu_{\calI_{\calD,L}}  \le O(\eps \cdot \cost(L, \mu_L))$.
  By definition, we have that
  for any solution $S, \mu$,   $  \cost(S,\mu) - \cost_{\calI_{\calD,L}}(S,\mu) \le$

  \begin{align*}
    &\sum_{\text{badly cut client } c} \dist(c,\mu(c))^p - \dist(\mu(c),\mu_L(c)))^p\\
    &\le  \sum_{\text{badly cut client } c}
    (1+3\eps)\dist(\mu(c),\mu_L(c))^p+ \\ &~~~~~~~~~~~~~~~~~~~~~~~~\frac{\dist(c,\mu_L(c))^p }{(\eps/(p+1))^p} -
    \dist(\mu(c),\mu_L(c))^p\\
    &\le \sum_{\text{badly cut client } c} 3\eps \cdot \dist(\mu(c),\mu_L(c))^p +
    \frac{\dist(c,\mu_L(c))^p }{(\eps/(p+1))^p}
  \end{align*}
  Where we have used Lemma~\ref{sometriangleineq} to go from the first to second line.
  Thus,
  \begin{equation}
    \label{eq:onedir}
    \cost(S,\mu) - (1+3\eps) \cost_{\calI_{\calD,L}}(S,\mu)
  \le  \sum_{\text{badly cut client } c}\frac{\dist(c,\mu_L(c))^p }{(\eps/(p+1))^p}
  \end{equation}

  Similarly, we have that $ \cost_{\calI_{\calD,L}}(S,\mu)- \cost(S,\mu) \le$

  \begin{align*}
    &\sum_{\text{badly cut client } c} \dist(\mu(c),\mu_L(c)))^p - \dist(c,\mu(c))^p\\
    &\le \sum_{\text{badly cut client } c} (1+3\eps) \dist(c,\mu(c))^p\\
    &~~~~~~~~~~~~~~~~~~~~~~~~~~+ \frac{\dist(c,\mu_L(c))^p}{(\eps/(p+1))^p} - \dist(c,\mu(c))^p\\
    &\le \sum_{\text{badly cut client } c} 3\eps \cdot \dist(c,\mu(c))^p +
    \frac{\dist(c,\mu_L(c))^p}{(\eps/(p+1))^p}    
  \end{align*}
  and so,
  \begin{align}
    \label{eq:otherdir}
    (1-3\eps)\cost_{\calI_{\calD,L}}(S,\mu)- \cost(S,\mu) \le \\
    \sum_{\text{badly cut client } c} \frac{\dist(c,\mu_L(c))^p}{(\eps/(p+1))^p}\nonumber
  \end{align}

  Now, observe that the right hand side of both Equations~\ref{eq:onedir}
  and~\ref{eq:otherdir} does not depend on $S$.
  Therefore, the expected value of $\nu_{\calI_{\calD,L}}$ is
  \begin{align*}
  E[\nu_{\calI_{\calD,L}}] \le \sum_{c} Pr[c \text{ badly cut}] \cdot
  \frac{\dist(c,\mu_L(c))^p}{(\eps/(p+1))^p} \le\\
  O(\eps^5 \cost(L)),
  \end{align*}
  where we have used Lemma~\ref{lem:badlycut}.
  We then apply Markov's inequality and obtain
  that $\calI_{\calD,L}$ satisfies the first condition with probability
  at least $1-\eps/3$. Let $\calE_{\nu}$ be the event that 
  $\calI_{\calD,L}$ satisfies the first condition.
  
  We now show that there exists a valid solution $\globalSh,\hat{\mu}$
  such that $\cost(\globalSh,\hat{\mu}) \le (1+O(\eps)) \cost(\opt,\mu_{\opt}) +
  O(\eps \cdot \cost(L,\mu_L))$.
  Consider an optimal solution $\opt$ and
  apply Proposition~\ref{prop:main} to $\opt$ and $L$.
  We let $\hL$, $\tL$, $\hF$, $\tF$ as defined by the
  proposition.
  
  We let $\calE_1$ be the event that there are at most
  $\eps^3 |\hL|$ facilities of $\hL$ that are badly cut.
  We have that by Lemma~\ref{lem:badlycut} the expected
  number of badly cut facilities in  $\hL$  is at most
  $\eps^5|\hL|$.
  Applying Markov's inequality we have 
  that $\calE_1$ holds with probability at least
  $1-\eps/3$.
  Condition on event $\calE_1$ happening.
  Consider $\globalSs$ as defined per Property~\ref{mainprop1} of the
  proposition. This solution contains $k - \Omega(\eps \cdot |\hL|)$.
  Thus, let $T$ be the temporary solution defined as $\globalSs$
  plus de badly cut facilities of $\hL$. Since we condition on
  event $\calE_1$ happening, we have that $T$ has at most
  $k$ centers. Hence, $T$ has at most $k$ centers with probability
  at least $1-\eps/3$.

  
  We finally make use Property~\ref{mainprop2} of the proposition
  to incorporate the remaining badly cut facilities of $L$, \ie
  the badly cut facilites of $\tL$. We apply Property~\ref{mainprop2}
  to our random procedure for defining badly cut facilities in
  $\tL$. This shows that there exists a
  solution $\globalSh,\hat{\mu}$ such that 
  $\cost(\globalSh,\hat{\mu}) \le (1+O(\eps)) \cost(\opt,\mu_{\opt}) +  
  O(\eps \cdot \cost(L,\mu_L))$, with probability at least
  $1-\eps/3$.
  Taking a union bound over the probability that $\calE_{\nu}$
  and $\calE_1$ do not happen yields the lemma.
\end{proof}

\section{Structural Result}
\label{sec:struct}
We start by providing some intuition on what our main structural
aims to achieve.
Our goal is to show that there exists a near-optimal solution
which contains the badly cut facilities. To achieve this, consider
the bipartite graph obtained by having one vertex for each facility
of our current solution on one side, one vertex for each facility
of the optimal solution on the other side, and an edge from the vertex
corresponding to a facility of $\opt$ to the vertex corresponding to
the closest facility of our current solution.

Now, what we would like to argue is that if we pick a random facility 
$f$ of our current solution whose corresponding vertex in the bipartite
graph has at least one incomming edge from some vertex $v$,
then we can replace the facility corresponding to vertex $v$ in $\opt$
with $f$ and leave the cost unchanged, up to a factor $(1+1/k)$.
In the uncapacitated setting, one can formalize this intuition.
It then remains to address the issue of facilities that have no incomming
edge. Since the numbers of facilities of $\opt$ and our
current solution are the same, for each facility that has no incomming
edge, there is another facility with more than one incomming edge.
This allows to delete some facilities of $\opt$ to
make room for facilities of our current solution. See~\cite{abs-1812-08664} for
a complete proof of this argument in the uncapacitated setting.

For capacitated versions of the problem the picture changes drastically
since a replacement may incur a large reassignment of clients. In
some cases, reassigning the clients of the facility that was deleted to
the badly cut facility that replaces it may result in an arbitrarily bad
cost. Thus, finding a careful reassignment of the clients is crucial.

We now turn to the formal proof.
Let $\opt$ be an optimal solution and $\local$ be any solution.
Define the \emph{charge} of a facility $f$ in a solution $S$
to be the total number of client assigned to $f$ in $S$.

\begin{proposition}
  \label{prop:main}
  Let $1/2 > \xi > 0$ be a fixed constant..
  Let $\opt$ be an optimal solution and $\local$ be any solution.
  Let $P$ be any random process such that each facility of $\opt$
  is selected with probability at most $\xi^2$.
  Then, there exists a partition of $\opt$ into two sets $\tF,\hF$
  and a partition
  of $\local$ into two sets $\tL,\hL$ such that
\begin{enumerate}
\item $|\hL| = |\hF|$, and so $|\tL| = |\tF|$.
\item\label{mainprop1} There exists $\hF^* \subseteq \hF$ of size
  at least
  $\xi |\hF|/3$ such that the set $\globalS^* = \opt - \hF^*$ is
  a solution of cost at most $\cost(\opt) +
  O(\xi (\cost(\opt) + \cost(\local)))$.
\item\label{mainprop2}
  Let $\tF^*$ be the set of facilities selected by the random process
  $P$.
  Then, there exists a 1-to-1 mapping $\phi: \tF \mapsto \tL$
  that satisfies the following.
  Let $\tL^* = \bigcup_{f \in \tF^*} \phi(f)$.
  With probability at least $1-\xi$,
  the solution $\globalS = \globalS^* - \tF^* \cup \tL^*$ and where
  each client served by a facility $\ell \in \tL^*$ in solution $L$
  is served by $\ell$ in solution $\globalS$,
  has cost at most   $$\cost(\opt) +   O(\xi (\cost(\opt) + \cost(L)))$$

\end{enumerate}
\end{proposition}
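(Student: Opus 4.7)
The plan is to build a bipartite \emph{transportation graph} $H$ between $\opt$ and $\local$: for every pair $(f,\ell)\in\opt\times\local$, let $\lambda_{f,\ell}$ denote the number of clients simultaneously assigned to $f$ in $\opt$ and to $\ell$ in $\local$. Every vertex of $H$ has weighted degree at most $\eta$, and by Lemma~\ref{sometriangleineq} the total ``link cost'' $\sum_{f,\ell}\lambda_{f,\ell}\dist(f,\ell)^p$ is at most $O((1+\eps)^p\cost(\opt)+\eps^{-p}\cost(\local))$. I would then define the partition by calling an edge \emph{heavy} if $\lambda_{f,\ell}\ge(1-\eps/3)\eta$; since every vertex has weight $\le\eta$, heavy edges automatically form a partial matching, which I take as the bijection $\phi:\tF\to\tL$. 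Setting $\hF:=\opt\setminus\tF$ and $\hL:=\local\setminus\tL$ then makes $|\hF|=|\hL|$ automatic (both solutions have $k$ facilities and $\phi$ is a bijection).

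For Property~\ref{mainprop1}, I would pick $\hF^*\subseteq\hF$ of size $\eps|\hF|/3$ greedily (e.g.\ the facilities of smallest total $\opt$-service cost) and argue that the clients of $\hF^*$ can be rerouted inside $\opt\setminus\hF^*$ via a min-cost flow along $H$. Feasibility relies on the \emph{scattered} structure of $\hF$: every $f\in\hF$ spreads its clients across multiple $\local$-facilities (no single $\ell$ carries more than $(1-\eps/3)\eta$ of its load), so the load of $\hF^*$ can be absorbed by other $\opt$-facilities through shared $\local$-partners. The cost overhead equals the link cost restricted to $\hF^*$ which, by the greedy averaging over an $\eps/3$-fraction of $\hF$ and Lemma~\ref{sometriangleineq}, is at most $O(\eps(\cost(\opt)+\cost(\local)))$.

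For Property~\ref{mainprop2}, I would analyze the random swap via a min-cost flow on $\globalS=\globalSs-\tF^*\cup\tL^*$. Clients whose $\local$-facility lies in $\tL^*$ are forced onto that $\local$-facility (contributing the corresponding part of $\cost(L)$), and the remaining clients are reassigned by default along $\phi$: a client $c$ whose $\opt$-facility $f$ lies in $\tF^*$ is first offered to $\phi(f)\in\tL^*$, with overflow spilling along $H$-links to other remaining $\opt$-facilities. Since $\phi$ is a bijection, $|\tF^*|=|\tL^*|$ preserves total capacity, and the heavy property of $\phi$ means the default routing of $f$ to $\phi(f)$ already absorbs nearly all of $f$'s load at negligible marginal cost. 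Because each $f\in\tF$ is selected with probability $\pi^2$ independently, linearity bounds the expected extra cost by $O(\pi^2(\cost(\opt)+\cost(\local)))$, and Markov's inequality yields the claimed $1-\pi$ bound.

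The main obstacle I anticipate is ensuring \emph{deterministic} feasibility of the flow for every realization of $\tF^*$, not merely in expectation: a bad draw could pile newly vacated load onto a few $\tL^*$-facilities whose capacities are already partly consumed by their forced $\local$-clients. Overcoming this requires a Hall-type condition on $H$ showing that the slack capacities left in $\opt\setminus(\tF^*\cup\hF^*)$, together with the one-to-one structure of $\phi$, always admit an integral routing of the displaced clients; this is where the paper's formulation as a min-cost max-flow (rather than a hand-rolled reassignment) becomes essential.
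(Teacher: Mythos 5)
Your high‑level architecture (a bipartite transportation graph between $\opt$ and $L$, a matching used as $\phi$, and a min‑cost‑flow rerouting analyzed via Markov) matches the paper's, but the specific matching you choose and the resulting feasibility argument have a genuine gap that the paper's construction is carefully engineered to avoid.

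You define $\phi$ as the set of \emph{heavy edges} $\lambda_{f,\ell}\ge(1-\eps/3)\eta$, which indeed form a partial matching. The problem is that this is a very restrictive local condition: an OPT facility $f$ serving $\eta$ clients spread evenly over, say, three $L$ facilities has no heavy edge, so it lands in $\hF$ even though it is fully loaded. Now consider closing an $\eps/3$‑fraction $\hF^*$ of such facilities. You cannot simply ``reroute their clients inside $\opt\setminus\hF^*$'' because the remaining OPT facilities may themselves be at (or near) capacity $\eta$: there is no slack in the system to absorb the displaced clients. The scattered property (no single $\ell$ carries more than $(1-\eps/3)\eta$ of $f$'s load) gives you no control on the \emph{receiving} side. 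This is exactly the ``Hall-type condition'' you flag at the end; it is not a technical afterthought but the heart of the proposition, and your construction does not supply it.

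The paper's construction resolves this with two coupled ideas you are missing. First, the matching is not locally defined by edge weight but extracted (via Theorem~\ref{thm:matchingtheory}) from a global \emph{integral max flow} $\calF$ that saturates heavy facilities; this guarantees every unmatched OPT facility carries little flow. Second, and crucially, before closing anything the paper builds the sequence‑based reassignment $\mu$ (the maps $p$, the routes‑to‑matched, Lemma~\ref{cl:onlysaturated}) which \emph{drains} each unmatched facility down to at most $\lfloor\eta/2\rfloor$ clients, while Lemma~\ref{lem:intersol} certifies the matched facilities stay within $\eta$. Only then does the paper pair up the remaining unmatched facilities $U_2$ through a shared nearest unmatched $L$‑facility $\xi(\cdot)$: each pair has combined residual load $\le\eta$, so one member can be closed and absorbed by the other without any global flow rerouting. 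Your proposal has no analogue of this draining step, and without it there is no reason a random $\eps/3$‑fraction of $\hF$ can be closed feasibly. A similar gap appears in your treatment of Property~\ref{mainprop2}: when $\ell\in\tL^*$ is forced to take all its $L$‑clients $N(\ell)$, the paper explicitly resolves the overflow $|N(\ell)|+|N(f)|-\eta$ by a one‑to‑one swap with the seats vacated elsewhere by $N(\ell)$; ``overflow spilling along $H$‑links'' is not, by itself, an argument that any integral feasible routing exists.
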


We consider the following bipartite graph $\Phi = (A,B, E)$ with
both capacities and costs (or weights) on the edges defined as follows.
The set $A$ contains one vertex for each facility of $\opt$ plus a special
vertex $t$.
The set $B$ contains one vertex for each facility of $\local$ plus a special
vertex $s$. We slightly abuse
notation and call the vertex representing facility $f$ by $f$ as well.
The set of edges is as follows: for each facility $f \in \opt$ and $\ell \in \local$,
for each client $c$ that is served by $f$ in $\opt$ and $\ell$ in $\local$, add a
directed edge $e$
from $f$ to $\ell$ in $\Phi$. We refer to $e$ as the edge corresponding to client $c$.
The capacity of the edge is $2$ and the
cost of the edge is $\dist(c, f) + \dist(c,\ell) = g_c + \ell_c$. Note that this may create
parallel edges -- parallel edges are kept in $\Phi$. 

Furthermore, for  each vertex of $f \in A - \{t\}$, 
we add $\lfloor \capac/2 \rfloor $ directed edges from $s$ to $f$ each with capacity 2 and cost 0.
Finally, from each vertex
of $\ell \in B - \{s\}$, we add $\lfloor \capac/2 \rfloor$ edges directed from $\ell$ to $t$
each with capacity 2 and cost 0.


\paragraph{Preprocessing step when $\eta$ is not a multiple of 2}
We now apply a preprocessing step for the case when $\capac$ is not a
multiple of 2. We assign a fractional weight of $1/\capac$ to each edge
that connect a vertex of $A-\{t\}$ to a vertex of $B - \{s\}$ of
$\Phi$. This defines a fractional matching over the vertices of $A-\{t\} \cup B - \{s\}$
where each vertex of $A-\{t\}\cup B - \{s\}$
that serves $\capac$ clients is such that the total weight of the edges adjacent to
it is 1. Therefore, there exists an integral matching where each vertex of f $A-\{t\} \cup B - \{s\}$
that serves $\capac$ clients is matched (see \eg~\cite{matchingtheory}). Consider
such a matching and delete the edge of the matching. We refer to the clients
corresponding to the deleted edge by the \emph{deleted} clients.
The degree of each vertex after the preprocessing step differs by at most 1 from
the original degree.

\bigskip

In the remaining, we let $\capac' = \capac-1$ if $\capac$ is not a multiple of 2, and
$\capac' = \capac$ otherwise. Hence $\capac'$ is a multiple of 2.
For each facility $f \in A - \{t\}$, 
we denote by $\capac(f)$ the outgoing degree of $f$ after the preprocessing step and
for each $f \in B - \{s\}$, we denote by $\capac(f)$ the incoming degree of $f$ after
the preprocessing step.
We put a demand of $2\lfloor \capac(f)/2 \rfloor$ on each vertex of $f \in A-\{t\}$ and a demand
of $2\lfloor \capac(f)/2 \rfloor$ on each vertex of $f \in B-\{s\}$.



\begin{lemma}
  \label{Lemma:maxflow}
  There exists a flow $\calF_0$ in $\Phi$ from $s$ to $t$ of cost
  at most $\cost(\local) + \cost(\opt)$ and that satisfies:
  \begin{itemize}
  \item \textbf{Integrality:} each edge between $A-\{t\}$ and $B-\{s\}$ receives a flow of either 0 or 2.
  \item \textbf{Demand:} each vertex $f \in A-\{t\}$ receives a flow of at least $2 \lfloor \capac(f)/2 \rfloor$;
    each vertex $f \in B-\{s\}$ receives a flow of at least $2 \lfloor \capac(f)/2 \rfloor$.
  \end{itemize}
\end{lemma}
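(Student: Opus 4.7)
My plan is to exhibit a feasible fractional flow of the claimed cost and then invoke LP integrality of min-cost flow to obtain an integral flow with the required $\{0,2\}$-valued client edges. As a warm-up fractional flow $\calF_{1/2}$, I would send one unit along every client edge from $A - \{t\}$ to $B - \{s\}$. The outflow from $f \in A - \{t\}$ is then exactly $\capac(f)$, and by the preprocessing step $\capac(f) \le 2\lfloor \capac/2 \rfloor$, which is the total capacity of the $\lfloor \capac/2 \rfloor$ parallel source edges from $s$ to $f$; the $\capac(f)$ units can thus be fractionally supplied from $s$ (placing $\capac(f)/\lfloor \capac/2\rfloor \le 2$ units on each source edge). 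A symmetric argument routes $\capac(\ell)$ units from each $\ell \in B-\{s\}$ to $t$. The resulting flow has cost $\sum_c (g_c + \ell_c) = \cost(\opt) + \cost(\local)$, and its throughput at every vertex is $\capac(f) \ge 2\lfloor \capac(f)/2\rfloor$, meeting the demand.

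To round $\calF_{1/2}$ to an integral flow whose client edges carry values in $\{0, 2\}$, I would rescale the instance by halving every capacity and every vertex lower bound: each edge then has capacity $1$, and the demand at each facility becomes $\lfloor \capac(f)/2 \rfloor$. The flow $\calF_{1/2}$ becomes a $1/2$-valued fractional flow in this rescaled instance, with cost $(\cost(\opt) + \cost(\local))/2$, certifying feasibility. Since all capacities and vertex lower bounds are integers, the associated min-cost flow LP admits an integer optimum; this can be justified by reducing the vertex lower bounds to standard arc lower bounds via a textbook source/sink split and invoking total unimodularity of the resulting network matrix, or equivalently by appealing to the integrality of the bipartite $b$-matching polytope after collapsing $s$ and $t$. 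Either route yields an integer $\{0,1\}$-valued flow $x^*$ in the rescaled instance of cost at most $(\cost(\opt) + \cost(\local))/2$.

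Setting $\calF_0 := 2 x^*$ in the original instance then gives the desired flow: every client edge carries $0$ or $2$ (integrality), every facility receives throughput at least $2\lfloor \capac(f)/2 \rfloor$ (demand), and the total cost doubles to at most $\cost(\opt) + \cost(\local)$, as required. The main obstacle I anticipate is cleanly justifying the LP integrality step in the presence of vertex lower-bound demands rather than standard source-to-sink flow bounds, and double-checking that the warm-up flow is genuinely feasible in every parity case of $\capac$ after the preprocessing — in particular verifying that $\capac(f) \le 2\lfloor \capac/2\rfloor$ holds whether or not $f$ was matched in the preprocessing step. Both are standard but need to be done explicitly.
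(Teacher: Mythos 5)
Your proposal is correct and takes essentially the same approach as the paper: exhibit a feasible fractional flow meeting the demands (you send a unit on every remaining client edge, accumulating cost $\sum_c (g_c + \ell_c) = \cost(\opt)+\cost(L)$), and then invoke integrality of min-cost flow with even capacities to round to a $\{0,2\}$-valued flow. The paper simply cites a classic integrality result for capacity-$2$ networks where you justify it via the rescale-by-half plus total-unimodularity argument, and your explicit warm-up flow also sidesteps a minor flow-conservation imprecision in the paper's sketch, but the underlying idea is the same.
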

\begin{proof}
  We will show the following claim:
  \begin{claim}
    \label{claim:preproc}
    There exists a flow that satisfies the demand constraint
    and the capacities
    of the edges, but that does not necessarily
    satisfy the integrality constraint.
  \end{claim}
  Then, assuming Claim~\ref{claim:preproc} the lemma follows:
  Classic results (\eg \cite{matchingtheory}) on the integrality of flows
  show that
  if the edges all have capacities 2,
  the demands are multiple of 2, and there exists a fractional
  flow satisfying the demands and capacities, 
  then there is 
  a flow that sends either a flow of 0 or 
  a flow of 2 in each edge and that satisfies the demand constraints.
  Thus, we turn to the proof of Claim~\ref{claim:preproc}

    
  Consider sending a flow from $s$ to each vertex $f$ of $A - \{t\}$
  of a value $2 \lfloor \capac(f)/2 \rfloor$.
  Since the total capacity from $s$ to $f$ is $2 \lfloor \capac(f)/2 \rfloor$
  this is possible and the current
  cost of the flow is 0.
  Now, consider for each non-deleted client $c$ served by $f$ in $\opt$, to
  send a flow of $1$ from $f$ 
  to the facility of $\local$ that serves it in solution $\local$.
  This corresponds to sending a flow of 1 through the edge corresponding to client $c$.
  By the definition of the graph, for each such client $c$ there exists
  an edge with capacity 2 between the two facilities. This ensures that the
  demand at each facility $f \in A-\{t\}$ is met.
    
  Finally, observe that each vertex $f$ of $B - \{s\}$ receives a flow
  that corresponds to the number of non-deleted clients served by the center in
  solution $\local$. 
  Thus, it is possible to complete the assignment by sending the flow arriving
  in each vertex $f$ of $B - \{s\}$ to $t$ using the edge of capacity
  $2 \lfloor \capac(f)/2 \rfloor$ and cost 0 and the demand at $f$ is met. 
\end{proof}

Let $\calF$ denote a maximal integral flow satisfying the
demand and integrality constraints, as per Lemma~\ref{Lemma:maxflow}.
We say that a facility is \emph{saturated} if the total flow it receives is $\eta'$.
We say that an edge is $\calF$-\emph{saturated} if the total flow
in the flow $\calF$ that goes through the edge is 2.

\begin{lemma}
  The cost of $\calF$ is at most $2(\cost(\local) + \cost(\opt))$.
\end{lemma}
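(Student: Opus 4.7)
The plan is to identify $\calF$ with (essentially) the specific integral flow already built in Lemma~\ref{Lemma:maxflow}, so that the cost bound of the present lemma falls out directly from what that lemma already guarantees. The demand constraints in $\Phi$ pin down the total value of any feasible integral flow, so ``maximal'' is naturally read as a minimum-cost feasible integral flow; under either reading, an elementary comparison with $\calF_0$ suffices.

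First, I would recall from Lemma~\ref{Lemma:maxflow} the existence of an integral flow $\calF_0$ in $\Phi$ of cost at most $\cost(\local) + \cost(\opt)$ that sends $0$ or $2$ units across every edge between $A-\{t\}$ and $B-\{s\}$ and saturates every demand. Since the flow polytope of $\Phi$ has a totally unimodular constraint matrix and all capacities and demands are integers, a minimum-cost feasible integral flow $\calF$ exists, and by optimality
\[
\cost(\calF) \;\le\; \cost(\calF_0) \;\le\; \cost(\local) + \cost(\opt) \;\le\; 2\bigl(\cost(\local) + \cost(\opt)\bigr),
\]
which gives the desired bound with ample slack.

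The only mild obstacle is the imprecise meaning of ``maximal integral flow'': it could mean ``maximum-value integral flow'', ``minimum-cost integral flow satisfying the demand constraints'', or simply the particular $\calF_0$ produced in Lemma~\ref{Lemma:maxflow}. Under each of these readings the chain of inequalities above still applies, since the demand constraints fix the total value of any feasible flow and $\calF_0$ is always a competitor. The extra factor of $2$ in the lemma statement is not tight but appears to be left deliberately loose, presumably to absorb small overhead in the arguments that follow rather than to be saturated here.
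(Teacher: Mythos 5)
Your proof and the paper's take genuinely different routes, and yours does not quite close under the reading of ``maximal'' the paper actually uses. In the text right after Lemma~\ref{Lemma:maxflow}, $\calF$ is introduced as ``a maximal integral flow satisfying the demand and integrality constraints'', and the role of this maximality is to rule out augmenting $s$-$t$ paths in the residual graph $\Phi^{\calF}$ (this is exactly what the proof of Lemma~\ref{cl:onlysaturated} invokes). That is a \emph{maximum-value} condition, not a minimum-cost one. Since the demand and capacity constraints already pin the value of every feasible flow, being maximal tells you nothing about cost, and a maximal $\calF$ can well have strictly larger cost than the witness $\calF_0$ from Lemma~\ref{Lemma:maxflow}. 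So your central step $\cost(\calF) \le \cost(\calF_0)$ is not justified under this reading; saying that ``$\calF_0$ is always a competitor'' would only help if the objective selecting $\calF$ were cost, which it is not.

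The paper instead uses the crude but robust observation that the cost of \emph{any} flow in $\Phi$ is at most the cost obtained by saturating every edge at its capacity. Each edge between $A-\{t\}$ and $B-\{s\}$ has capacity $2$ and cost $g_c + \ell_c$ for its associated client $c$, and every other edge has cost $0$, so the fully saturated cost is $\sum_c 2(g_c+\ell_c) = 2(\cost(\opt)+\cost(\local))$. This holds for every flow in the graph, cost-optimal or not, and that is precisely why the factor $2$ appears in the statement: it is not loose cushioning for later steps but the price of being agnostic about which maximum-value flow $\calF$ happens to be. Your route would indeed give the sharper bound $\cost(\opt)+\cost(\local)$ if the paper could take $\calF$ to be cost-minimizing or identical to $\calF_0$, but nothing downstream forces or uses that choice.
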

\begin{proof}
  This follows from the fact that when all the edges of the graph are saturated
  the total cost is $2(\cost(\local) + \cost(\opt))$.
\end{proof}

\newcommand{\boundonsize}{\eta'}
We now define $U$ to be the set of facilities of $A$ such that
$\eta(f) \ge \boundonsize/2$,
\ie $U = \{f \mid f \in A, \capac(f) \ge \boundonsize/2 \}$. We will refer to
the facilites of $U$ as \emph{heavy} facilities.
Let $\Lambda$ be the set of facilities of $U$ whose corresponding
vertices in $\Phi$ that are saturated by  
flow $\calF$. Define $\bar{\Lambda} = A - \{t\} - \Lambda$.
Let $\zeta$ be the set of facilities of $B - \{s\}$ that are saturated
by flow $\calF$. Define $\bar{\zeta} = B- \{s\} - \zeta$.

We now aim at matching vertices of $\Lambda$ and $\zeta$ to vertices
of $B - \{s\}$ and $A - \{t\}$ respectively.
We will make use of the following classic theorem (see \eg~\cite{matchingtheory}).
\begin{theorem}[\cite{matchingtheory}]
  \label{thm:matchingtheory}
  Let $G = (A,B,E)$ be a bipartite graph with edge weights
  $w : E \mapsto \R_+$.
  Let $M_0 : E \mapsto [0,1]$ be a fractional matching
  of weight $W = \sum_e M_0(e) \cdot w(e)$. 
  There exists an integral matching $M_1 : E \mapsto \{0,1\}$
  that satisfies:
  \begin{itemize}
  \item Each vertex $u \in A \cup B$ such that $\sum_{(u,v) \in E} M_0((u,v)) =1$ is matched in $M_1$, \ie 
    $\exists (u,v) \in E$ s.t. $M_1((u,v)) = 1$, and;
  \item The weight of $M_1$ is at most $W$, \ie $\sum_{e \in E } M_1(e) \cdot w(e) \le W$, and;
  \item $M_1((u,v)) = 1 \implies M_0((u,v)) \neq 0$.
  \end{itemize}
\end{theorem}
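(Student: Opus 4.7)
The plan is to derive Theorem~\ref{thm:matchingtheory} from the integrality of the bipartite matching polytope. Call a vertex $v$ \emph{$M_0$-saturated} if $\sum_{e \ni v} M_0(e) = 1$, and write $E' = \{e \in E : M_0(e) > 0\}$ for the support of $M_0$. First I would set up the linear program
\begin{align*}
  \min \quad & \sum_{e \in E'} w(e)\, x_e \\
  \text{s.t.} \quad & \sum_{e \ni v} x_e = 1 \text{ if } v \text{ is $M_0$-saturated}, \\
  & \sum_{e \ni v} x_e \le 1 \text{ if } v \text{ is not $M_0$-saturated}, \\
  & 0 \le x_e \le 1 \text{ for each } e \in E'.
\end{align*}
The restriction of $M_0$ to $E'$ is a feasible solution whose objective value is exactly $W$, so the LP optimum is at most $W$.

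The key step is to note that the constraint matrix is the node-edge incidence matrix of the bipartite graph $G$, with integral right-hand side, and is therefore totally unimodular. Every vertex of the feasible polytope is consequently integral, so there exists an optimal vertex solution $M_1 : E' \to \{0,1\}$ of weight at most $W$. Extending $M_1$ by zero outside $E'$ yields an integral matching of $G$ that simultaneously satisfies all three required properties: the equality constraints force every $M_0$-saturated vertex to be matched by $M_1$; the restriction of the variables to $E'$ guarantees $M_1(e) = 1 \Rightarrow M_0(e) \neq 0$; and the bound on the optimum gives $\sum_e M_1(e) w(e) \le W$.

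As an alternative combinatorial route, one could iteratively reduce the non-integral part of $M_0$: pick any cycle in the support, which is of even length since $G$ is bipartite, and shift weight alternately by $\pm \delta$ around it, which preserves every vertex sum. Choosing the sign so that the total weight does not increase and taking $\delta$ maximal either zeroes out an edge or saturates it, strictly shrinking the non-integral support. Once the support becomes a forest one finishes by similar shifts along alternating paths anchored at leaves. The main technical obstacle in this direct approach is showing that the final path-rounding steps never violate an already-tight vertex constraint; the LP/total-unimodularity proof sidesteps that bookkeeping, which is why I would present it as the main argument.
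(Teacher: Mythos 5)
Your LP/total-unimodularity argument is correct. The paper does not actually prove Theorem~\ref{thm:matchingtheory}; it cites it as a classical result from matching theory (Lov\'asz--Plummer), so there is no internal proof to compare against. Your argument is the standard textbook one: the node-edge incidence matrix of a bipartite graph is totally unimodular, and turning some $\le 1$ degree constraints into equalities only duplicates (negated) rows, which preserves total unimodularity; hence the polytope over the support $E'$ is integral, $M_0|_{E'}$ is feasible with objective $W$, and an integral optimal vertex gives the desired $M_1$. Two small points worth tidying if you keep the LP route as the main argument: (i) the box constraints $x_e \le 1$ are redundant given $x_e \ge 0$ and the degree constraints, so they add nothing (and do not threaten TU either, since they are single-nonzero rows); (ii) you should state explicitly that the LP is feasible because $M_0|_{E'}$ is feasible and bounded below by $0$ because $w \ge 0$. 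Your combinatorial alternative also works, and your caution about path-rounding at a saturated endpoint is exactly the bookkeeping issue: a saturated leaf of the support already carries a weight-$1$ edge, so it never arises as a troublesome endpoint, but spelling that out is what makes the cycle/path-cancelling proof go through; the TU argument is cleaner precisely because it encodes that case analysis once and for all.
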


Consider rescaling the amount of flow $\calF$ sent through each edge by a factor
$1/\boundonsize$
and denote by $\calM_0$ the underlying flow. Seeing $\calM_0$ as a matching 
of weight at most $2(\opt + \cost(\local))/\boundonsize$ 
we have the following application of Theorem~\ref{thm:matchingtheory}:
\begin{cor}
  \label{cor:matchingcost}
  There exists an integral matching $\calM$ in $\Phi$ that satisfies:
  \begin{enumerate}
  \item Each facility of $\Lambda$ is matched to a facility
    of $B - \{s\}$, and;
  \item Each facility of $\zeta$ is matched to a facility
    of $A - \{t\}$, and;
  \item The weight of the matching is at most $2(\cost(\opt) + \cost(\local))/\boundonsize$, and; 
  \item \label{cor:cst:pos} If a facility $f$ is matched to a facility $\ell$, then it must
    be that the flow $\calF$ going from $f$ to $\ell$ is positive.
  \end{enumerate}
\end{cor}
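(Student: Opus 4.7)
\smallskip

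The plan is to apply Theorem~\ref{thm:matchingtheory} directly to the fractional matching induced by the rescaled flow $\calM_0$. Concretely, view the bipartite graph $\Phi' = (A-\{t\}, B-\{s\}, E')$ (the subgraph of $\Phi$ obtained by deleting $s$, $t$, and all edges incident to them), keeping parallel edges as distinct edges with their associated costs $w(e)=g_c+\ell_c$, and define $\calM_0(e) = \calF(e)/\boundonsize$ for every $e \in E'$.

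First I would verify that $\calM_0$ is a legitimate fractional matching on $\Phi'$. Since each edge of $\Phi'$ carries flow in $\{0,2\}$ by the integrality property of Lemma~\ref{Lemma:maxflow}, we have $\calM_0(e) \in \{0, 2/\boundonsize\} \subseteq [0,1]$ (using $\boundonsize \geq 2$). For any vertex $f \in A-\{t\}$, the sum $\sum_{(f,\ell) \in E'} \calM_0((f,\ell))$ equals the total flow out of $f$ into $B-\{s\}$ divided by $\boundonsize$, which in turn equals the flow into $f$ from $s$ divided by $\boundonsize$, and this is at most $\capac(f)/\boundonsize \le 1$. The analogous statement holds for vertices in $B-\{s\}$. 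Moreover, a vertex $f \in \Lambda$ is by definition saturated by $\calF$, so the incident sum equals $\boundonsize/\boundonsize = 1$; the same holds for $\ell \in \zeta$.

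Next I would compute the weight of the fractional matching. By linearity,
\[
\sum_{e \in E'} \calM_0(e) \cdot w(e) \;=\; \frac{1}{\boundonsize}\sum_{e \in E'} \calF(e)\cdot w(e) \;=\; \frac{\cost(\calF)}{\boundonsize} \;\le\; \frac{2(\cost(\opt)+\cost(\local))}{\boundonsize},
\]
where the last inequality uses the cost bound already established for $\calF$. Applying Theorem~\ref{thm:matchingtheory} to $\calM_0$ yields an integral matching $\calM = M_1$ satisfying: every vertex whose fractional degree is $1$ is matched (covering all of $\Lambda$ and $\zeta$, which gives conditions~1 and~2); the weight of $M_1$ is at most the weight of $\calM_0$, giving condition~3; and $M_1(e)=1$ implies $\calM_0(e) \neq 0$, hence $\calF(e)>0$, giving condition~4.

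I expect the proof itself to be essentially routine once the reduction is set up; the only delicate point is making sure the saturation definitions line up with a fractional degree of exactly $1$ (rather than $\le 1$) at every vertex of $\Lambda \cup \zeta$, so that Theorem~\ref{thm:matchingtheory} forces them all to be matched. That is guaranteed by the definition of saturation together with the preprocessing step that makes $\capac'=\boundonsize$ a multiple of $2$, so integral $\calF$-values of $2$ rescale cleanly to $2/\boundonsize \le 1$ and the flow-conservation identity at an $s$- or $t$-saturated vertex transfers to fractional degree exactly $1$ in $\Phi'$.
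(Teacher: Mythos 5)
Your proposal is correct and follows exactly the paper's argument: rescale the flow $\calF$ by $1/\boundonsize$ to obtain a fractional matching $\calM_0$ of weight at most $2(\cost(\opt)+\cost(\local))/\boundonsize$, note that saturated vertices have fractional degree exactly $1$, and invoke Theorem~\ref{thm:matchingtheory}. You merely spell out the routine verifications (fractional degrees at most $1$, the weight computation, and the support condition yielding positive flow) that the paper leaves implicit.
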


We define $\calM_A$ to be the set of vertices of $A - \{t\}$ that
are matched and $\calM_B$ the set of vertices of $B - \{s\}$ that are 
matched. Note that $\Lambda \subseteq \calM_A$ and $\zeta \subseteq \calM_B$.

Consider a facility $\ell \in \calM_B$, we define the following mapping. 
Let $f(\ell)$
be the facility of $A$ that is matched to $\ell$ in $\calM$.

We now consider each pair of matched vertices $\ell, f(\ell)$ and define a function $p$ that
maps each edge incoming to $\ell$ to either an edge outgoing of $f(\ell)$ or to $f(\ell)$
directly. 
For each vertex $\ell \in \calM_B$, define $t(\ell)$ and $\bar{t}(\ell)$ to be respectively the numbers of
non-$\calF$-saturated and $\calF$-saturated edges ingoing to $\ell$ and not originating
from $f(\ell)$. For each vertex
$f \in \calM_A$ define $s(f)$ and $\bar{s}(f)$ to be respectively the numbers of non-$\calF$-saturated and $\calF$-saturated
edges outgoing from $f$ and not going to $\ell$. 

The mapping $p$ is defined as follows. Consider a pair of matched vertices $\ell,f(\ell)$.
\begin{enumerate}
\item If $t(\ell) > s(f(\ell))$, choose an arbitrary subset of size $s(f(\ell))$ among the non-$\calF$-saturated
  edges incoming to $\ell$ and define a one-to-one mapping from these edges to the edges in $s(f(\ell))$.
  For the $t(\ell) - s(f(\ell))$ remaining edges, map them to $f(\ell)$. This defines the mapping $p$ for the 
  non-$\calF$-saturated  edges incoming to $\ell$ for the case ($t(\ell) > s(f(\ell))$).
  
  Otherwise, when $t(\ell) \le s(f(\ell))$, simply define an arbitrary injective function from the non-$\calF$-saturated
  edges incoming to $\ell$ to the non-$\calF$-saturated edges outgoing from $f(\ell)$.
\item Proceed similarly with the $\calF$-saturated edges that are incoming to $\ell$ and outgoing from $f(\ell)$.
\end{enumerate}

We now consider vertices of $A - \calM_A-\{t\}$. For each such vertex $f$, for each edge $e$ outgoing
from $f$, we define $S(e)$ to be the sequence obtained by recursively applying the mapping $p$
(\ie $S(e) = e, p(e), p(p(e)), \ldots,$) until we can't apply $p$ again, namely
either we reach an edge $e'$ such that $p(e') = f'$ where $f' \in \calM_A$, or we reach an edge $e' = (f',\ell')$
where $\ell' \in B - \calM_B - \{s\}$.
Let $\calS$ be the set of all the sequences defined above.
We have the following lemma.
\begin{lemma}
  \label{lem:noinfinite}
  For each sequence $S(e) \in \calS$, we have that
  each edge of the graph appears at most once in $S(e)$ and so, $S(e)$ is finite.
  Moreover, for each edge $e'$ of the graph, there is at most one sequence in $\calS$  containing $e'$.
\end{lemma}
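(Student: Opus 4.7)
The plan is to deduce both statements from a single key fact: restricted to edges (as opposed to vertex-valued outputs), $p$ is a globally injective partial function on $E$, whose image consists only of edges whose $A$-endpoint lies in $\calM_A$. Once this is in hand, both halves of the lemma follow by a short cancellation argument.

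First I would verify global edge-injectivity of $p$. For a single matched pair $(\ell, f(\ell))$, the construction defines $p$ on incoming edges of $\ell$ by two one-to-one correspondences (one on the non-$\calF$-saturated side, one on the $\calF$-saturated side) to outgoing edges of $f(\ell)$, with any leftover incoming edges mapped to the \emph{vertex} $f(\ell)$ (a terminal symbol, not an edge). Thus on edges-that-map-to-edges, $p$ restricted to incoming edges of $\ell$ is injective, with image contained in the outgoing edges of $f(\ell)$. For distinct $\ell_1,\ell_2\in\calM_B$, the matching $\calM$ forces $f(\ell_1)\ne f(\ell_2)$, so the two image sets are disjoint (they touch different $A$-vertices). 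Hence $p$ is globally injective on its edge-to-edge part, and every image edge has its $A$-endpoint in $\calM_A$.

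Next I would prove the single-sequence claim. Suppose some edge $e'$ appeared twice in $S(e)$, at positions $k_1<k_2$. Writing $p^{j}$ for the $j$-fold iterate, this means $p^{k_1-1}(e)=e'=p^{k_2-1}(e)$. Applying the injectivity of $p$ at each of the outermost $\min(k_1,k_2)-1$ applications (all well-defined because we are cancelling edge-valued outputs), we would obtain $e=p^{k_2-k_1}(e)$ with $k_2-k_1\geq 1$. But then $e$ lies in the image of $p$, which forces its $A$-endpoint to be in $\calM_A$, contradicting the hypothesis that $e$ is an outgoing edge of a vertex in $A-\calM_A-\{t\}$. Since the graph has finitely many edges and no edge repeats, $S(e)$ is finite.

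Finally, the cross-sequence claim is essentially the same argument. If the edge $e'$ appears in both $S(e_1)$ and $S(e_2)$, at positions $k_1$ and $k_2$ respectively, then $p^{k_1-1}(e_1)=e'=p^{k_2-1}(e_2)$. Cancelling by injectivity gives, WLOG assuming $k_1\leq k_2$, the identity $e_1=p^{k_2-k_1}(e_2)$. If $k_2=k_1$ then $e_1=e_2$ and the two sequences coincide; if $k_2>k_1$ then $e_1$ lies in the image of a positive power of $p$, so its $A$-endpoint is in $\calM_A$, again contradicting that $e_1$ was chosen outgoing from $A-\calM_A-\{t\}$. The only real subtlety, and the step I would be most careful about, is the bookkeeping that $p$ is genuinely injective as an edge-to-edge map across all matched pairs simultaneously; the possible many-to-one collapse onto the vertex $f(\ell)$ is a red herring because such outputs terminate the sequence and never appear as edges in $S(e)$.
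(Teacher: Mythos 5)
Your proposal is correct and follows essentially the same strategy as the paper's own proof: both rest on the observations that the edge-to-edge part of $p$ is injective (because for each matched pair $(\ell,f(\ell))$ the map is injective by construction and distinct pairs have disjoint domains and images), that every edge in the image of $p$ emanates from a matched $A$-vertex, and that the initial edge of any sequence emanates from an unmatched $A$-vertex, so cancellation by injectivity yields a contradiction. The paper phrases this as ``take the first repeated edge, apply $p^{-1}$, and obtain an earlier repeat,'' while you phrase it as ``iterate $p^{-1}$ all the way down to the initial edge and observe it cannot be in $p$'s image''; these are the same argument.
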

\begin{proof}
  We first argue that $S(e)$ is finite. 
  Let $e = (f,\ell)$ with $f \in A - \calM_A - \{t\}$.
  Recall that $p$ maps the edges adjacent to a facility $\ell \in \calM_B$ and not coming from $f(\ell)$ either 
  to a facility $f(\ell)$ in which case, the sequence stops, or
  is an injective mapping to the edges outgoing from $f(\ell)$ and not going to $\ell$.

  Furthermore, observe that for any edge $(f^j,\ell^j)$ of the sequence, $p((f^j,\ell^j))$ is an edge
  which starts at a matched vertex. Therefore, except for the first edge $e$, no edge of the sequence
  is adajcent to $f$ since $f$ is unmatched, \ie no edge in the sequence $p(e),p(p(e)),\ldots$ is adjacent to $f$.

  Assume towards contradiction that there is an edge that appears twice in the sequence
  and consider the first one in the order of the sequence. 
  Let $(v_i,u_i)$ be this edge. By the above argument, we have that $v_i \neq f$ since otherwise there
  would be an edge in the subsequence $p(e),p(p(e)),\ldots$ that is adjacent to $f$.

  Thus, we have $v_i \neq f$, and so $p^{-1}((v_i,u_i))$ is also twice in the sequence since $p$ is injective on the edges.
  This is a contradiction since $(v_i,u_i)$ is the first
  one of the sequence, it follows that $S(e)$ is finite.
  
  Finally, since for any $S((f,\ell)) \in \calF$, we have that $f$ is an unmatched vertex, the edge $(f,\ell)$
  cannot appear in another sequence $S(e) \in \calF$. Thus applying the same reasonning as above, an edge cannot
  appear in two different sequences.
\end{proof}

We now distinguish two types of sequences. We say that an edge $e$ is a \emph{route-to-matched} if the
sequence $S(e)$ stops at a vertex $f \in \calM_A$, and a \emph{route-to-unmatched} if the sequence
$S(e)$ stops at a vertex $\ell \in B - \calM_B - \{s\}$.
We have the following lemma.
\begin{lemma}
  \label{cl:onlysaturated}
  Consider a facility $f \in A-\calM_A-\{t\}$ and such that
  $\eta(f) \ge \boundonsize/2$.
  The number of edges $e$ adjacent to $f$
  and such that $S(e)$ is a route-to-unmatched sequence
  is at most $\boundonsize/2 -1$ 
\end{lemma}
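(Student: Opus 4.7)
The plan is to leverage the maximality of $\calF$ (Lemma~\ref{Lemma:maxflow}) to show that only $\calF$-saturated outgoing edges of $f$ can initiate route-to-unmatched sequences, and to then bound the number of those edges using the fact that $f$ is itself not saturated. For the first, easier, step I would observe that since $f \in A - \calM_A - \{t\}$ and $\Lambda \subseteq \calM_A$ by Corollary~\ref{cor:matchingcost}, we have $f \notin \Lambda$; combined with the heaviness hypothesis $\eta(f) \geq \eta'/2$ this forces $f$ to be non-$\calF$-saturated, so by integrality the flow through $f$ is at most $\eta' - 2$, bounding the number of $\calF$-saturated outgoing edges of $f$ by $\eta'/2 - 1$.

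The heart of the argument is to show by contradiction that every non-$\calF$-saturated outgoing edge of $f$ yields a route-to-matched sequence. Suppose instead that some non-sat edge $e_1 = (f, v_1)$ starts a route-to-unmatched sequence $S(e_1) = e_1, \ldots, e_m$ with $e_i = (u_{i-1}, v_i)$, $u_0 = f$, $u_i = f(v_i)$ the matched partner of $v_i$ for $1 \le i \le m-1$, and $v_m$ unmatched. Because $p$ preserves saturatedness, every $e_i$ is non-$\calF$-saturated; because $v_m \notin \zeta$ (as $\zeta \subseteq \calM_B$), $v_m$ is not saturated; and by Corollary~\ref{cor:matchingcost}~(\ref{cor:cst:pos}), each matched pair $(u_i, v_i)$ admits a $\calF$-saturated edge from $u_i$ to $v_i$, hence residual backward capacity $2$. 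These pieces assemble into a residual $s$-to-$t$ walk
$$s \,\to\, f \,\to\, v_1 \,\leftarrow\, u_1 \,\to\, v_2 \,\leftarrow\, u_2 \,\to\, \cdots \,\leftarrow\, u_{m-1} \,\to\, v_m \,\to\, t,$$
alternating forward use of the non-sat sequence edges $e_i$ with backward use of the matched-pair sat edges, prefixed by a non-sat $s \to f$ edge (available since $f$ is not saturated) and suffixed by a non-sat $v_m \to t$ edge (available since $v_m$ is not saturated). After extracting a simple residual path from this walk (to handle any vertex repetitions), pushing $2$ units along it strictly increases the total flow while preserving integrality and demand feasibility, contradicting the maximality of $\calF$.

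The main obstacle will be to verify that the residual path is well-defined, handling any vertex repetitions in the walk and confirming that each matched-pair edge has the stated backward residual capacity, and that the augmentation respects both the demand and integrality constraints from Lemma~\ref{Lemma:maxflow}. Once this contradiction is established, the lemma follows immediately: every route-to-unmatched sequence starting at $f$ must begin with a $\calF$-saturated outgoing edge, and the first step bounds those by $\eta'/2 - 1$.
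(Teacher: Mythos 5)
Your proposal follows essentially the same augmenting-path argument as the paper: show that a non-$\calF$-saturated outgoing edge starting a route-to-unmatched sequence would yield an $s$--$t$ residual path (forward along the non-saturated sequence edges, backward through the saturated matched-pair edges), contradicting the maximality of $\calF$, and then count the $\calF$-saturated edges. Your version is slightly cleaner in two small spots — you make explicit that $f$ heavy and unmatched implies $f \notin \Lambda$, hence $f$ unsaturated, to get the flow bound $\eta'-2$, and you push $2$ units to preserve integrality directly rather than appealing to re-integralization — but the substance is identical to the paper's proof.
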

\begin{proof}
  Since $f$ is unmatched,
  we have that the total flow going through $f$ in the flow $\calF$ is at most $\boundonsize$.
  Thus, there are at most $\boundonsize/2-1$  edges that adjacent
  to $f$ and that are $\calF$-saturated.
  We will show that for each edge $e$ adjancent to $f$ such that $S(e)$ is a route-to-unmatched
  sequence, we have that $e$ is $\calF$-saturated.
  
  
  Now, suppose towards contradiction that there exists an edge $e$ such that $S(e)$ is a route-to-unmatched sequence
  that is not-$\calF$-saturated and consider the
  path induced by the sequence $S(e)$. By Lemma~\ref{lem:noinfinite}, this sequence is finite and so
  let $(f_j,\ell_j)$ be the last edge of the sequence. Since $S(e)$ is a route-to-unmatched, 
  $\ell_j$ is unmatched.

  Since we have that $e$ is not-$\calF$-saturated and by definition of $p$, a direct induction
  shows that all the edges in the sequence $S(e)$ are not-$\calF$-saturated, and so these are all
  edges with positive capacities in the residual graph $\Phi^{\calF}$. Moreover, observe that
  for each matched pair $\ell_x,f(\ell_x)$, Corollary~\ref{cor:matchingcost}, Property~\ref{cor:cst:pos}
  implies that there are at least 2 units of flow going from $f(\ell_x)$ to $\ell_x$ in flow $\calF$. This induces
  an edge with positive capacity from $\ell_x$ to $f(\ell_x)$ in $\Phi^{\calF}$.

  Thus, consider the subgraph of $\Phi^{\calF}$ induced by the edges of $S(e)$ and the edges between matched
  pairs $\ell_x,f(\ell_x)$ and consider a simple path from $f$ to $\ell_j$ in this graph. This path uses each
  edge at most once and so it is possible to route at least one unit of flow through this path without violating
  the capacities of the edges of the path.

  Furthermore, $\ell_j$ and $f$ are not matched
  and so there is at least one edge with positive capacity from $\ell_j$ to $t$ and an edge with positive capacity
  from $s$ to $f$ in $\Phi^{\calF}$. Therefore there is a path with positive capacity from $s$ to $t$ in $\Phi^{\calF}$.
  Furthermore, observe that routing a unit of flow through this path can only increase the flow going through
  any of the vertices of the graph. Thus, 
  there is a flow with higher value which satisfies the demand constraints, a contradiction to the maximality of $\calF$
  that concludes the proof.
\end{proof}

\paragraph{Assignment $\mu$}
For each facility $f \in U - \calM_A$, namely an unmatched heavy facility, consider the edges $e=(f,\ell)$
such that
$S(e)$ is a route-to-matched sequence. For the client $c$ associated with edge $e$, we let $\mu(c)$ map to the matched
vertex of $\calM_A$ that terminates the sequence $S(e)$. Let $\calC_1$ be the set of these clients.
For clients in $\calC- \calC_1$ (including deleted clients), we les $\mu(c)$ be the facility that
serves it in $\opt$.

\paragraph{Sequences to paths}
For each facility $f \in U - \calM_A$, namely an unmatched heavy facility, consider the edges $e=(f,\ell)$.
For each such edge $e$, we define the \emph{path associated to sequence $S(e)$} as follows. The first edge
of the path is $(f,\ell)$, the second edge of the path is $(\ell,f(\ell))$, the third edge of the path
is $p(e) = (f(\ell), \ell_1)$. For $i>1$, the $2i$-th and $2i+1$-st edges of the path are
edges $(f(\ell_{i-1}),\ell_i)$ and $(\ell_i,f(\ell_i))$. If there are multiple edges $(\ell_i,f(\ell_i))$, the one
with smallest weight is chosen. We let $\calP(e)$ denote the path associated to edge $S(e)$.
By the triangle inequality, the length of the path is simply the sum of the weights of the edges. 

We show the following lemma, which will be used in two different ways:
\begin{enumerate}
\item to bound the cost of reassigning a client whose corresponding edge is a route-to-matched client
  to the facility of $\calM_A$ at the end of the sequence;
\item to bound the cost of reassigning a client whose corresponding edge is a route-to-unmatched client
  to the facility of $B - \calM_B$ at the end of the sequence.
\end{enumerate}
\begin{lemma}
  \label{lem:path}
  The sum over all facility $f \in U - \calM_A$, of the sum over all edges $e=(f,\ell)$ of the length
  of the paths associated to $S(e)$ is at most $4(\cost(\opt) + \cost(L))$.
  In other words,
  $$
  \sum_{f \in U - \calM_A} \sum_{e = (f,\ell)} \text{length}(\calP(e)) \le 4(\cost(\opt) + \cost(L)).
  $$
\end{lemma}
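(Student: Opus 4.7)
The plan is to split each path $\calP(e)$ into two kinds of edges and bound each contribution separately. The first kind are the \emph{sequence edges}, namely the edges of $S(e)$ that appear in $\calP(e)$ directly; the second kind are the \emph{matching edges} of the form $(\ell_i, f(\ell_i))$ that bridge consecutive sequence edges across a matched pair. Since $\text{length}(\calP(e))$ is the sum of the two kinds of weights, it suffices to bound each total separately, summed over all paths in the collection.

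For the sequence edges, Lemma~\ref{lem:noinfinite} implies that each edge of $\Phi$ appears in at most one sequence, and hence as a sequence edge in at most one path. Since the total weight of all edges of $\Phi$ equals $\sum_{c \in \calC}(g_c + \ell_c) = \cost(\opt) + \cost(L)$, the total contribution of sequence edges over all paths is at most $\cost(\opt) + \cost(L)$.

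For the matching edges, I would count how many times a fixed matched pair $(\ell, f(\ell))$ is crossed across all paths. Each crossing corresponds to a sequence $S(e)$ entering $\ell$ through an incoming edge not originating from $f(\ell)$; by Lemma~\ref{lem:noinfinite}, distinct crossings correspond to distinct edges of $\Phi$ incident to $\ell$, so the number of crossings of this pair is at most the in-degree of $\ell$ in $\Phi$, which is at most $\eta$. The path always uses the minimum-weight parallel edge between $\ell$ and $f(\ell)$, whose weight is at most that of the specific edge chosen by $\calM$ at this pair. Summing over matched pairs and applying Corollary~\ref{cor:matchingcost}, which bounds the total weight of $\calM$ by $2(\cost(\opt) + \cost(L))/\boundonsize$, gives a matching-edge contribution of at most $\eta \cdot 2(\cost(\opt) + \cost(L))/\boundonsize$. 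Since $\boundonsize \in \{\eta-1, \eta\}$ the ratio $\eta/\boundonsize$ is close to $1$, and combining this with the sequence-edge bound yields the claimed $4(\cost(\opt) + \cost(L))$.

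The main obstacle is the careful counting of crossings, using Lemma~\ref{lem:noinfinite} to argue that a sequence cannot charge the same matched pair more than once per distinct incoming edge, so that the total number of crossings across all paths is controlled by the in-degrees in $\Phi$ rather than by the number of paths. A minor subtlety is that the path takes the minimum-weight parallel edge while Corollary~\ref{cor:matchingcost} refers to the $\calM$-edge, but this only helps us since the former is no heavier than the latter.
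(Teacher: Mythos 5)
Your proposal is correct and follows essentially the same two-part decomposition as the paper: sequence edges are bounded by the total edge weight of $\Phi$ (using the edge-disjointness from Lemma~\ref{lem:noinfinite}), and matching edges are bounded by counting at most degree-of-$\ell$ crossings per matched pair and invoking Corollary~\ref{cor:matchingcost}. The only cosmetic difference is that you bound the in-degree by $\eta$ and then argue $\eta/\boundonsize \approx 1$, whereas the paper uses $\boundonsize$ directly (the post-preprocessing degree), which cancels cleanly; this costs you nothing beyond a slightly looser constant.
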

\begin{proof}
  Observe that the by Lemma~\ref{lem:noinfinite}, the paths are edge disjoint, except for the edges of the path
  that are connecting two vertices that are matched together (namely, the even edges of the path).
  More concretely, in the path $ e=(f,\ell), \dist(\ell,f(\ell)), p(e) = (f(\ell),\ell_1), \dist(\ell_1,f(\ell_1)),\ldots $ associated to
  sequence $S(e)$, the edges $(f,\ell)$, $p((f,\ell))$, $p(p((f,\ell)))$, $\ldots$ appear in at most one sequence.
  Thus, the sum over all sequences of the edges that are not connecting two matched vertices is bounded by the total
  sum of edge weights of the graph and so at most $(\cost(\opt)+ \cost(L))$.

  We now bound the number of times $\dist(\ell,f(\ell))$ is going to appear in the sum of the lengths of the paths of all the sequences.
  We first observe that the number
  of paths in which this edge appears is bounded by the incoming degree of $\ell$ which corresponds to the number of clients
  served by $\ell$ in $L$ and so at most $\boundonsize$. Thus, we have that $\dist(\ell,f(\ell))$ appears at most $\boundonsize$ times
  in the sum.
  Finally, Corollary~\ref{cor:matchingcost}, Property~\ref{cor:cst:pos}, combined with the triangle inequality shows that
  $\sum_{\ell \in \calM_B} \dist(\ell,f(\ell)) \le \sum_{\ell \in \calM_B} w(e((\ell,f(\ell))) \le
  2(\cost(\opt)+ \cost(L))/\boundonsize$, where $e((\ell,f(\ell))$ is the edge matching $\ell$ to $f(\ell)$
  and $w$ its weight. Thus, since $\dist(\ell,f(\ell))$ appears at most $\boundonsize$ times for each matched pair $\ell,f(\ell)$,
  we have that the total cost induced by these edges is at most $2(\cost(\opt)+\cost(L))$.
\end{proof}

\paragraph{Remark on the case $p>1$.}
For any objective where the cost of assigning client $c$ to facility $f$ is $\dist(c,f)^p$, for $p>1$,
Lemma~\ref{lem:path} does not allow to relate the cost of assigning a client $c$ to the facility that
is at the end of path of the sequence $S(e)$ where $e$ is the edge corresponding to client $c$.
Indeed, for example for $p=2$, the cost for a client in $\calC_1$ is going to be the square of the
sum of the weights of the edges in the path and this cannot be related to the cost of an optimal solution and
the cost of the local solution
directly since the solutions pays the sum of the lengths squared (instead of the square of the sum of the lengths).

The way to handle this is to modify the definition of the length of a path associated to $S(e)$.
Consider first a route-to-matched sequence
$S(e)= e, p(e), p(p(e)), \ldots$ and let $e=(f,\ell), p(e) = (f(\ell),\ell_1), p^i(e) = (f(\ell_{i-1}),\ell_i)$, for $i>1$.
Let $f(\ell_{s-1})$ be the matched vertex that terminates the sequence.
We define the length of the path associated to $S(e)$ as
$\text{length}(\calP(e)) = $
\begin{align*}
  (\dist(f,\ell)+\dist(\ell,f(\ell))^p +\\
  \sum_i (\dist(f(\ell_{i-1}),\ell_i) + \dist(\ell_i,f(\ell_i)))^p.
\end{align*}

Observe that $(a + b)^p \le 2^{p}(a^p + b^p)$ and so we have that
\begin{align*}
  \text{length}(\calP(e)) \le
  2^p(\dist(f,\ell)^p+\dist(\ell,f(\ell))^p + \\
  \sum_i (\dist(f(\ell_{i-1}),\ell_i)^p + \dist(\ell_i,f(\ell_i))^p).
\end{align*}
Now recall that for each edge of the graph $(f',\ell')$, the weight is given by $\dist(c,f')^p + \dist(c,\ell')^p$ where $c$ is the client
associated to the edge. Thus we have that $\dist(f',\ell')^p \le 2^p(\dist(c,f')^p + \dist(c,\ell')^p)$.
Therefore, mimicking the proof of Lemma~\ref{lem:path}, we have that
$\sum_{f \in U - \calM_A} \sum_{e = (f,\ell)} \text{length}(\calP(e))$
is at most $2^{O(p)} (\cost(\opt) + \cost(L)).$

Now, the length of a path $\calP(e)$ does not correspond to the cost
of assigning of the client $c$ associated to edge $e$ to
the facility at the end of the path. Instead, it corresponds to
the cost of assigning $c$ to $f(\ell)$ plus the cost of assigning
the client $c_1$ of $f(\ell)$ to $f(\ell_1)$ whose edge is in
the sequence, plus the cost of assigning the client $c_2$ of
$f(\ell_1)$ to $f(\ell_2)$ whose edge is in the sequence,
and more generally the cost of assigning
the client of $f(\ell_i)$ whose edge is in the sequence
to $f(\ell_{i+1})$, for all $i < s$.
Indeed, the total cost of such a reassignment is given by 
$(\dist(f,\ell)+\dist(\ell,f(\ell))^p + \sum_i (\dist(f(\ell_{i-1}),\ell_i) + \dist(\ell_i,f(\ell_i)))^p = \text{length}(\calP(e))$
and so bounded by $2^{O(p)}(\cost(\opt) + \cost(L))$.

In the case of $p>1$, we let $\mu^p$ be the reassignment defined above. It is easy to see that
if $\mu$ meets the capacities then $\mu^p$ also meets the capacities.




We now turn to show that the capacities are met for assignment $\mu$. 
\begin{lemma}
  \label{lem:intersol}
  Consider the solution defined by the set of centers of $\opt$ together with the assignment $\mu$.
  This solution satisfies the following properties:
  \begin{enumerate}
  \item\label{lem:intersol:cond1} Each facility $f$ of $\opt$ whose corresponding vertex is unmatched, \ie $f \not\in \calM_A$,
    is assigned at most $\lfloor \eta/2 \rfloor$ clients. In other words, $|\{c \mid \mu(c) = f\}| \le \lfloor \eta/2 \rfloor$.
  \item\label{lem:intersol:cond2} Each facility $f$ of $\opt$ whose corresponding vertex is matched, \ie $f \in \calM_A$,
    is assigned at most $\eta$ clients. In other words, $|\{c \mid \mu(c) = f\}| \le \eta$.
  \item\label{lem:intersol:cond3} For each client $c \in \calC-\calC_1$, its cost is identical to its cost in $\opt$.
    Moreover, $\sum_{c \in \calC_1} \dist(c,\mu(c)) \le 2^{O(p)}(\cost(\opt) + \cost(L))$.
  \end{enumerate}
  
\end{lemma}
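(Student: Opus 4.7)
The plan is to verify the three conditions in turn; most of the work is in condition 2.

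For condition 3, each client $c \in \calC - \calC_1$ is assigned by $\mu$ to its $\opt$-facility, so it contributes exactly the same cost as in $\opt$. For $c \in \calC_1$ corresponding to edge $e$, iterating the triangle inequality along the path $\calP(e)$ bounds the reassignment distance by $\text{length}(\calP(e))$ (in the $p>1$ case one uses the staggered reassignment $\mu^p$ with the modified path length defined just above the statement). Summing and invoking Lemma~\ref{lem:path} (or its $p>1$ analogue) gives the bound $2^{O(p)}(\cost(\opt)+\cost(L))$.

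For condition 1, the clients assigned by $\mu$ to an unmatched $f$ are exactly the original $\opt$-clients at $f$ whose corresponding edges are not in $\calC_1$, plus at most one deleted client at $f$. If $f \in U - \calM_A$ (unmatched and heavy), Lemma~\ref{cl:onlysaturated} bounds the number of route-to-unmatched edges at $f$ by $\boundonsize/2 - 1$, so the total is at most $\boundonsize/2 \le \lfloor \eta/2 \rfloor$. If $f$ is unmatched and non-heavy, then $\capac(f) \le \boundonsize/2 - 1$ directly by definition of $U$, so the total is again at most $\boundonsize/2$.

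For condition 2, fix $f \in \calM_A$ with match $\ell$. The number of clients assigned to $f$ is $\capac(f) + D(f) + N$, where $D(f)\le 1$ counts deleted clients at $f$ and $N := \max(0,t(\ell)-s(f)) + \max(0,\bar{t}(\ell)-\bar{s}(f))$ counts the $\calC_1$-arrivals; indeed, by the definition of $p$ and the edge-disjointness of sequences (Lemma~\ref{lem:noinfinite}), each sequence ending at $f$ corresponds to a distinct edge $e'$ with $p(e')=f$. The main obstacle is to show $\capac(f) + D(f) + N \le \eta$. Two identities drive the argument: the edge-counting relation $(t(\ell)-s(f)) + (\bar{t}(\ell)-\bar{s}(f)) = \capac(\ell) - \capac(f)$, and the flow identity $\bar{t}(\ell) - \bar{s}(f) = (F_\ell - F_f)/2$ where $F_v$ is the total flow at $v$ under $\calF$ (since $\bar{t}(\ell)$ plus the number of saturated $f$-to-$\ell$ edges equals $F_\ell/2$, and symmetrically at $f$).

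A short case analysis on the signs of the two differences then closes the argument. If both are non-negative, $N = \capac(\ell) - \capac(f)$ and the total reduces to $\capac(\ell) + D(f) \le \boundonsize + D(f) \le \eta$. If both are non-positive, $N = 0$ and the total is $\capac(f) + D(f) \le \eta$. The mixed case is the most delicate: combining the two identities with the flow bounds $F_v \le \boundonsize$ and $F_v \ge 2\lfloor \capac(v)/2 \rfloor$ expresses the total as $\capac(\ell) + (F_f-F_\ell)/2 + D(f)$ (or its symmetric counterpart), which is bounded by $\boundonsize + D(f) \le \eta$, using that $D(f)=1$ can only occur when $\eta$ is odd and hence $\boundonsize = \eta - 1$.
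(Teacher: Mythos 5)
Your proof is correct and follows essentially the same route as the paper's.  Condition~3 and Condition~1 are handled the same way (the extra observation that a non‑heavy unmatched facility satisfies $\capac(f)\le \eta'/2-1$ by the definition of $U$ is a tiny but welcome clarification that the paper's prose glosses over).  For Condition~2, the quantity you call $\capac(f)+D(f)+N$ with $N=\max(0,t(\ell)-s(f))+\max(0,\bar t(\ell)-\bar s(f))$ is exactly the paper's $\nu$ plus a deleted client, and your two identities are just a cleaner packaging of the paper's four displayed inequalities: $\capac(v)\le\eta'$ are the degree bounds, $2\lfloor\capac(v)/2\rfloor\le F_v\le\eta'$ are the flow/demand bounds, and $\bar t(\ell)-\bar s(f)=(F_\ell-F_f)/2$ is implicit in the paper's third and fourth inequalities.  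The two cases where the signs agree collapse to degree bounds in both proofs, and your mixed‑case rewriting $\text{Total}=\capac(\ell)+(F_f-F_\ell)/2+D(f)$ (respectively its counterpart $\capac(f)+(F_\ell-F_f)/2+D(f)$), bounded via $\lceil\capac(v)/2\rceil+\eta'/2\le\eta'$, is arithmetically equivalent to the paper's derivation $\nu\le\bar s(f(\ell))+\bar t(\ell)+2\bar m_\ell+1\le 2\bar t(\ell)+2\bar m_\ell\le\eta'$.  The one place you are slightly terse is in asserting that the number of $\calC_1$‑arrivals at $f(\ell)$ is at most $N$; the justification you give (distinct sequences traverse disjoint edge sets by Lemma~\ref{lem:noinfinite}, and $N$ counts precisely the edges $e'$ with $p(e')=f(\ell)$) is the right one and should be spelled out, since not every such edge need actually be the tail of an initiated sequence --- this is why the bound is ``at most $N$'' rather than ``exactly $N$''.
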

\begin{proof}
  We first prove Property~\ref{lem:intersol:cond1}. From Lemma~\ref{cl:onlysaturated}, 
  the only clients that
  are assigned to an unmatched facility in $\mu$ are the one for which sequence $S(e)$ of the corresponding edge $e$
  is a route-to-unmatched plus possibly one deleted client. It follows that the total number of clients assigned
  is $\boundonsize/2 = \lfloor \eta/2 \rfloor$. 

  To prove Property~\ref{lem:intersol:cond2}, we start with the following observation. Consider a pair of matched vertices,
  $\ell,f(\ell)$. The total number of new elements that can be assigned to $f(\ell)$ in mapping $\mu$ is, by definition of $p$,
  the number of edges
  that are incoming to $\ell$ and not originating from $f(\ell)$ minus the number of edges outgoing from $f(\ell)$ and
  not going to $\ell$, or in other words $\max(\bar{t}(\ell)- \bar{s}(f(\ell)), 0) + \max(t(\ell)  - s(f(\ell)),0)$.
  Let $m_{\ell}$  and $\bar{m}_{\ell}$ respectively denote the number of non-$\calF$-saturated and $\calF$-saturated
  edges between $\ell$ and $f(\ell)$

  It follows that the total number of non-deleted clients served by $f(\ell)$ in assignment $\mu$ is at most
  \begin{align*}
    \nu =\\
    \bar{s}(f(\ell)) + s(f(\ell)) + m_{\ell} + \bar{m}_{\ell} + \\
    \max(\bar{t}(\ell)- \bar{s}(f(\ell)), 0) + \max(t(\ell)  - s(f(\ell)),0).
  \end{align*}
  We aim at showing that $\nu$ is at most $\boundonsize$.
  We have the following equations:
  \begin{itemize}
  \item $ \bar{s}(f(\ell)) + s(f(\ell)) + m_{\ell} + \bar{m}_{\ell} \le \boundonsize$, since the degree of $f(\ell)$ in $\Phi$
    (after preprocessing) is at most $\boundonsize$;
  \item $ \bar{t}(\ell) + t(\ell) + m_{\ell} + \bar{m}_{\ell} \le \boundonsize$, since the degree of $\ell$ in $\Phi$
    (after preprocessing) is at most $\boundonsize$;    
  \item $2\lfloor \frac{\bar{t}(\ell) + t(\ell) + m_{\ell} + \bar{m}_{\ell}}{2} \rfloor  \le 2 \bar{t}(\ell)  + 2 \bar{m}_{\ell} \le
    \boundonsize$
    since the flow $\calF$ going through $\ell$ is at least $\lfloor \frac{\bar{t}(\ell) + t(\ell) + m_{\ell} + \bar{m}_{\ell}}{2} \rfloor$
    by the definition of the demand at $\ell$ and at most $\boundonsize$ since the outgoing capacity from $\ell$
    is $\boundonsize$. Moreover, each edge of $\bar{t}(\ell)$ and $\bar{m}_{\ell}$ carries 2 units of flow.
  \item $2\lfloor \frac{\bar{s}(f(\ell)) + s(f(\ell)) + m_{\ell} + \bar{m}_{\ell}}{2} \rfloor  \le 2 \bar{s}(f(\ell))  + 2 \bar{m}_{\ell} \le
    \boundonsize$,
    for the same reason than the above case.
  \end{itemize}

  In the case where either both $\bar{t}(\ell) \ge \bar{s}(f(\ell))$ and $t(\ell) \ge s(f(\ell))$ or both 
  $\bar{t}(\ell) \le \bar{s}(f(\ell))$ and $t(\ell) \le s(f(\ell))$, then we have that $\nu$ is at
  most $\boundonsize$ by combining directly with the first two equations of the above list.

  We thus turn to the case where $\bar{t}(\ell) \ge \bar{s}(f(\ell))$ and $t(\ell) \le s(f(\ell))$.
  First, if $\bar{t}(\ell) = \bar{s}(f(\ell))$, then both max are 0 and so 
  $\nu \le  \bar{s}(f(\ell)) + s(f(\ell)) + m_{\ell} + \bar{m}_{\ell} \le \boundonsize$ by the first of the above equations.
  
  So we assume $\bar{t}(\ell) > \bar{s}(f(\ell))$ and $t(\ell) \le s(f(\ell))$.
  Thus, we have $\nu \le s(f(\ell)) + m_{\ell} + \bar{m}_\ell + \bar{t}(\ell)$.
  From the fourth of the above equations we have that
  $\bar{s}(f(\ell))  + s(f(\ell)) + m_{\ell} + \bar{m}_{\ell} -1 \le 2\bar{s}(f(\ell))  + 2 \bar{m}_{\ell}$
  and so
  $s(f(\ell)) + m_{\ell} + \bar{m}_{\ell} -1 \le \bar{s}(f(\ell))  + 2 \bar{m}_{\ell}.$
  We then combine with the upper bound on $\nu$ to obtain that
  $$\nu \le s(f(\ell)) + m_{\ell} + \bar{m}_{\ell} + \bar{t}(\ell) \le \bar{s}(f(\ell)) + \bar{t}(\ell) + 2 \bar{m}_{\ell} +1.$$
  Therefore, since $\bar{s}(f(\ell)) < \bar{t}(\ell)$, we conclude that
  $\nu \le 2\bar{t}(\ell) + 2 \bar{m}_{\ell}  \le \boundonsize$, using the third equation. The case where
  $\bar{t}(\ell) \le \bar{s}(f(\ell))$ and $t(\ell) \ge s(f(\ell))$ is symmetric.

  We now need to incorporate possibly one deleted client of $f(\ell)$. Namely a client served by $f(\ell)$ in $\opt$
  whose edge has been deleted during the preprocessing step and so that is still assigned to $f(\ell)$ (recall that at most
  1 client served by a facility is deleted during the preprocessing step).
  Observe that there is a deleted client only if $\eta$ is not a multiple of 2. In which case we have that
  $\boundonsize = \eta -1$ and so, the total number of clients assigned to $f(\ell)$ is $\boundonsize + 1 \le \eta$
  as claimed.

  We finally turn to Property~\ref{lem:intersol:cond3}. Since the assignment for client $c \in \calC-\calC_1$ is the same
  than in an optimal solution, the first sentence is clear. For the second part, we bound the distance
  from each client $c \in \calC_1$ to $\mu_c$ by the length of the path induced by the sequence $S(e)$, where $e$ is
  the edge associated to $c$. By the triangle inequality this indeed provides an upper bound on $\dist(c,\mu(c))$.
  We note here that this is a correct bound if the cost of a solution is the sum of distances (\ie for the $k$-median
  objective). In the case of the
  $\cost(c,\mu(c)) = \dist(c,\mu(c))^p$, with $p>1$, so we use the previous remark.

  Finally, to bound the sum of the length of the edges in all the paths associated to the route-to-matched sequences
  we simply invoke Lemma~\ref{lem:path}. 
  It follows, the total cost of the assignment $\mu$ for the vertices of $\calC_1$ is at most $4 (\cost(\opt)+ \cost(L))$,
  or using the above remark, at most $2^{O(p)} (\cost(\opt)+ \cost(L))$ for the case $p>1$.
\end{proof}

We can now prove the main proposition.

\begin{proof}[Proof of Proposition~\ref{prop:main}]
  For each unmatched facility $f \in \opt$, we let $\xi(f)$ denote the unmatched facility of $L$ that is the closest to
  $\xi(f)$.
  We then divide the unmatched facilities of $\opt$ into two groups, $U_1,U_2$ as follows.
  Let $U_1 = \{ f \mid f \in \opt \text{ and there is no facility $f' \neq f$ s.t. $\xi(f') = \xi(f)$}\}$.
  Let $U_2$ be the rest of the unmatched facilities of $\opt$.

  We let $\tF = U_1 \cup \calM_A$ and let $\tL = \{\ell \mid \exists f \in U_1,~s.t.~ \xi(f) = \ell \} \cup \calM_B$,
  and $\phi$ be the 1-to-1 mapping
  of the facilities of $\tF$ to $\tL$ defined by the matching $\calM$ and the function $\xi$ on $U_1$. 

  We define $\hF = F - \tF = U_2$ and $\hL = L - \tL$. By the pigeonhole principle we immediately have
  that $|\hF| = |\hL|$ and $|\tF| = |\tL|$.
  To finish the proof of the proposition, we need to prove Properties~\ref{mainprop1} and~\ref{mainprop2}.

  We first aim at proving Property~\ref{mainprop1}. Consider the mapping $\xi$ of the facilities
  of $\hF$. We let $\chi(\ell) = \{f \mid \xi(f) = \ell\}$.
  We now proceed as follows: for each facility $\ell$ such that $|\chi(\ell)| > 1$, we pair up the facilities
  of $\opt$ such that $\xi(f) = \ell$. Let $(f_1,g_1),(f_2,g_2),\ldots$ be the list of $\lfloor |\chi(\ell)|/2 \rfloor$
  pairs. For each pair, we will consider closing one facility.
  We need to guarantee two things: first that the
  capacities are met and second that the total service cost is bounded.
  
  To ensure that the capacities are met, we make use of Lemma~\ref{lem:intersol}.
  For each pair $(f_i,g_i)$, we follow the assignment $\mu$ for the set of clients
  that they serve in $\opt$. Without loss of generality, assume that
  $f_i$ is farther away to $\ell = \xi(f_i) = \xi(g_i)$ than $g_i$. 
  This guarantees that both facilities serve at most
  $\lfloor \eta/2 \rfloor$ clients.
  We consider the cost of
  closing down $f_i$ and serving its clients by $g_i$.
  Moreover, $\mu$ reassigns clients
  served by the unmatched facilities to matched facilities
  and Lemma~\ref{lem:intersol} shows that the total number of clients assigned
  to a matched facilities is at most $\eta$. It follows that the total
  number of clients assigned to $f_i$ and $g_i$ is at most $\eta$ and so removing
  one of the two facilities still yield a feasible solution.

  We now turn to bounding the cost of closing one facility per pair $(f_i,g_i)$. 
  Consider first for simplicity the case $p=1$. The reassignment we have designed is as follows:
  \begin{enumerate}
  \item For the clients $c$ whose
    corresponding edge $e$ is s.t. $S(e)$ is a route-to-matched sequence
    and such that the facility serving $c$ belongs to a pair $(f_i,g_i)$,
    the assignment is the same as in $\mu$.
  \item For the clients $c$ whose
    corresponding edge $e$ is s.t. $S(e)$ is a route-to-unmatched sequence,
    and s.t. $c$ is served by a facility $f_i$ in a pair $(f_i,g_i)$,
    the assignment is now $g_i$. Let
    $\ell$ be the facility such that $\xi(f_i) = \ell$ and $\xi(g_i) = \ell$.
    The cost of the assignment is $\dist(c,g_i) \le \dist(c, \ell) + \dist(\ell, g_i)
    \le  \dist(c, \ell) + \dist(\ell, f_i) =  2\dist(c, \ell) + \opt(c)$, 
    by the triangle inequality and since $\dist(\ell, f_i) \ge \dist(\ell, g_i)$.
    We redefine $\mu(c) = g_i$.

  \item For the remaining clients, the assignment is the same than in $\opt$.
  \end{enumerate}

  Consider the clients $c$ whose
  corresponding edge $e$ is s.t. $S(e)$ is a route-to-matched sequence
  and such that the facility serving $c$ belongs to a pair $(f_i,g_i)$.
  Lemma~\ref{lem:intersol} shows that the sum over all pairs $(f_i,p_i)$ of the cost of
  the reassignment of their clients whose
  corresponding edge $e$ is s.t. $S(e)$ is a route-to-matched sequence is bounded by
  $O(\cost(\opt) + \cost(L))$.

  For the clients $c$ whose
  corresponding edge $e$ is s.t. $S(e)$ is a route-to-unmatched sequence,
  and s.t. $c$ is served by a facility $f_i$ in a pair $(f_i,g_i)$.
  Let $\ell$ be the facility such that $\xi(f_i) = \ell$ and $\xi(g_i) = \ell$.
  We have that the cost is $\dist(c,\mu(c)) \le 2\dist(c, \ell) + \opt(c)$. Now observe that
  $\dist(c,\ell) \le \text{length}(\calP(e))$ since $\ell$ is the closest
  unmatched facility to $f_i$.
  Thus applying Lemma~\ref{lem:path}, the sum over all the facilities
  $f_i$ that are closed of the reassignment cost of their clients $c$ 
  whose
  corresponding edge $e$ is s.t. $S(e)$ is a route-to-unmatched sequence
  is at most $O(\cost(L) + \cost(\opt))$.
  
  For the remaining clients, their cost is the same than in $\opt$.

  We thus have that:
  $\sum_{(f_i,g_i)} \sum_{c \text{ served by $f_i$ or $g_i$}} \dist(c,\mu(c))$
  is $O(\cost(\opt) + \cost(L))$.
  Moreover, $\mu(c)$ does not assign more than $\eta$ clients to any facility.

  Now consider selecting each pair $(f_i,g_i)$ with probability $\eps$
  and closing down $f_i$.
  For each selected pair, we follow the assignment prescribed above and
  for the remaining pairs, we follow the optimal assignment.
  The assignment is feasible no matter what are the selected pairs since we only consider reassigning clients
  served by the selected pairs in $\opt$ to matched facilities or
  to one of the facility of the pair. By Lemma~\ref{lem:intersol}, we know that we can reassign all clients
  of the pairs and still get a feasible solution, therefore the solution obtained is definitely
  feasible.
  
  
  By the above discussion, the expected cost of the assignment
  for the clients that are served by a facility of a pair $(f_i,g_i)$,
  is at most
  \begin{align*}
  \sum_{(f_i,g_i)}~ (\text{pr}[(f_i,g_i)\text{ is selected}] \cdot\\
  \sum_{c \text{ served by $f_i$ or $g_i$}} \dist(c,\mu(c))) \\
  + \sum_{(f_i,g_i)}~(1-\text{pr}[(f_i,g_i)\text{ is selected}]) \\
  \sum_{c \text{ served by $f_i$ or $g_i$}} \opt(c)
  \end{align*}
  which is at most

  \begin{align*}
    \sum_{(f_i,g_i)}\sum_{c \text{ served by $f_i$ or $g_i$}}\opt(c) + \\
    O(\eps (\cost(\opt) + \cost(L))).
  \end{align*}
  
  Therefore, since for the rest of the clients, the cost is optimal,
  there exists a  solution $\globalSs$ of cost at most
  $\cost(\opt) +  O(\eps (\cost(\opt) + \cost(L))).$

  We finally prove Property~\ref{mainprop2} of the proposition.
  Consider a facility $f \in \tF$ and a facility $\ell \in \tL$ such that
  $\phi(f) = \ell$.
  Let $c(f)$ be the cost of replacing $f$ by $\ell$ in solution $\globalSs$
  and serving the set $N(\ell)$ of all the clients served by $\ell$ in $L$ by $\ell$ in the solution
  $\globalSs$.

  Our bound of $c(f)$ is in 2 steps. We first bound the cost of serving by $\ell$ all the clients assigned
  to $f$ in assignment $\mu$. This is an intermediate solution that does not satisfy that the clients
  served by $\ell$ in $L$ are also served by $\ell$ in solution $\globalSs$. We will then modify
  the intermediate solution to ensure this last property.


  Consider first the case where $f$
  is an unmatched facility. We will reassign the clients of $f$ in two ways. First, for the clients
  $c$ whose corresponding edge $e$ is s.t. $S(e)$ is a route-to-matched sequence. In that case,
  we use $\mu(c)$ as a reassignment and so these clients are served by a different facility than $f$.
  Again, this is compatible with the previous reassignment since
  the mapping $\mu$ that reassigns all clients of unmatched facility ensures that no matched
  facility receives more than $\eta$ clients by Lemma~\ref{lem:intersol}.

  Second, for the set $N(f)$ of clients whose corresponding edge $e$ is s.t. $S(e)$ is a route-to-unmatched sequence,
  we temporarily assign them to $\ell$ and we can bound the cost for such clients by length($\calP(e)$),
  since $\ell$ is the closest unmatched facility.

  Now consider the case where $f$ is a matched facility. We proceed identically but we cannot use the bound
  on the length of the path since this bound only applies to unmatched facilities. In that case, we use the 
  bound given by the matching. We have that the cost paid by each client $c$ of $f$ is $\dist(c,f) + \dist(f,\phi(f))$.
  By the triangle inequality, $\dist(f,\phi(f))$ is at most the weight of the edge in the matching.
  Serving all the clients assigned to $f$ in mapping $\mu$ incurs an additional cost (in addition to what they
  are paying due in mapping $\mu$)
  $\eta \left( \sum_{\ell \in \calM_B} \dist(\ell,f(\ell))\right)$
  which is by Corollary~\ref{cor:matchingcost} at most
  $\eta \left( 2(\cost(\opt) + \cost(L))/\eta\right)$.

  Therefore, the reassignment performed for the intermediate solution has cost at most $O(\cost(L) + \cost(\opt))$.
  Note that this indeed takes into accound the reassignment of the client $c$ whose corresponding edge
  $e$ is s.t. $S(e)$ is a route-to-matched sequence.

  We now move from the intermediate solution to a solution where for each selected pair $(\ell,f)$, the clients
  served by $\ell$ in $L$ are also served by $\ell$ in $\opt$.  
  Let's assign the clients of $N(\ell)$ by $\ell$.
  By doing so, we may have exceeded the capacity of $\ell$: we have $|N(f)|, |N(\ell)| \le \eta$ but
  $|N(f)| + |N(\ell)| > \eta$. To fix this, we use the room left out on the other facilities
  by the $|N(\ell)|$ clients served by $\ell$ in solution $L$. Indeed, since these clients are now served by
  $\ell$, they leave some free room to the other clients.
  Thus, consider an arbitrary set $B$ of $|N(\ell)| + |N(f)|- \eta$
  clients of $|N(f)|$ (note that $|N(f)| \ge |N(\ell)| + |N(f)| - \eta$) and define a 1-to-1 mapping from this set to
  an arbitrary subset $B'$ of size
  $|N(\ell)| + |N(f)|- \eta$ of $N(\ell)$ (again note that $|N(\ell)| \ge |N(\ell)| + |N(f)| - \eta$ so this is possible).
  Now, each client of $B$ is assigned to the facility that serves the client it is mapped to in $B'$ in the solution
  $\globalSs$. Since $\globalSs$ is feasible, this solution is also feasible.
  By the triangle inequality, the increase in cost for doing so is at most $L(c) + \opt(c)$ where $c$ is the client
  in $B'$ and so, in total for the clients in $N(f)$ at most $\sum_{c \in N(\ell)} L(c) + \opt(c)$.

  Summing up over all such facilities and using Lemma~\ref{lem:path}, we have that
  $\sum_{f \in \tF} c(f) = O(\cost(L) + \cost(\opt))$. Thus, if each facility $f$ is replaced by $\phi(\ell)$ with probability $\xi^2$,
  we have that the expected cost of the solution is $\sum_{f \in \tF} pr[f \text{ selected by the random process}] \cdot  c(f) =
  \sum_{f \in \tF} \xi^2 c(f)  = O(\xi^2 \cost(L) + \cost(\opt))$.
  By Markov inequality, we have that the resulting solution has cost at most
  $(1+O(\xi))\cost(\opt) + O(\xi (\cost(L) + \cost(\opt)))$ as claimed.  
  This concludes the proof of the proposition.

  To handle the case $p>1$, one needs to proceed as prescribed in the previous remark.

\end{proof}

\section{A Simple QPTAS for Doubling Metrics
  -- Proof of Theorem~\ref{thm:qptas}}
\label{sec:qptas}
In this section, we give a simple approach for obtaining
an algorithm running in time $\exp(((d \rho \eps^{-1})^{1/\rho}
\log n)^{O(d)})$,
which is a quasi-polynomial bound for any fixed $d$.
In this section and the next, it will be convenient to see an
assignment of a client $c$ to a center $\ell$ as a path from
$c$ to $\ell$ that may intersect portals and whose length is simply
the $p$th power of the sum of the length of the segments of the path.

Our algorithm is very simple.
Let $\eps > 0$ be a sufficiently small constant.
Assume we know how to compute
a $\gamma$-approximate solution
$L$. We show how to compute a  solution of cost at most
$(1+\eps)\cost(\opt) + \eps \cdot \cost(L)$.
At start, the algorithm computes a randomized split-tree $\calD$.
Let's condition on
the event that $\calI_{\calD,L}$ is a good instance (w.r.t. $L$).
This happens with
probability at least $1-\eps$ by Lemma~\ref{lem:valid}.
The algorithm then computes $\calI_{\calD,L}$ and works in $\calI_{\calD,L}$,
its goal being to find the best valid solution in $\calI_{\calD,L}$.
We design a dynamic program that given $\calI_{\calD,L}$ and $L$
computes a $(1+\eps)$-approximation to the best valid solution. We then preprocess
this new instance
as follows. For each facility of $\ell$ that is badly cut, we
\emph{force it open}: our dynamic program is forced to pick $\ell$
in its solution. Since the dynamic program aims at finding
the best valid solution, we have that the best solution
in the preprocessed instance $\calI'_{\calD,L}$ can be transformed into
a valid solution in $\calI_{\calD,L}$ with the same cost.

The dynamic program proceeds on $\calD$ from the leaves to the
the root. The algorithm then computes a hierarchy of nets: namely
for each box $P$ of level $i$, it computes a
$\rho 2^{i+1}$-net which is a superset of the nets computed for the
descendant boxes of $P$.
The net of a box is then used as a set of \emph{portals}.
It follows that the number of portals at a given box is
$\rho^{-O(d)}$.

The definition a cell $C$ of the dynamic program is a tuple
\begin{align*}
  (B, \langle (n^{\text{in}}_{p_1, d_1}, n^{\text{in}}_{p_1, d_2}, \ldots, n^{\text{out}}_{p_1, d_{\max}}),\\
(n^{\text{in}}_{p_2,d_1}, \ldots, n^{\text{out}}_{p_2, d_{\max}}),\ldots,\\
   (n^{\text{in}}_{p_{\rho^{-O(d)}}, d_1},\ldots,n^{\text{out}}_{p_{\rho^{-O(d)}}, d_{\max}})\rangle, k_B)
\end{align*}
where $p_1,\ldots,p_{\rho^{-O(d)}}$ are the portals of box $B$, and $d_i$ are power of $(1+\eps^2/\log n)$
in the range [minimum distance; maximum distance]. Given such a cell $C$ we say that 
$n^{\text{out}}_{p_{i}, d_{j}}$ is a parameter of $C$.

Then, the value of such a table cell is the 
the value of the best valid solution for the clients
in box $B$ under the constraints that:
\begin{enumerate}
\item For each portal $p$ of $B$,
  $n^{\text{in}}_{p_i, d_j}$ 
  clients from the inside of $B$ at distance in $[d_j;(1+\eps)d_j]$ that are served outside $B$
  and crossing at $p$, $n^{\text{out}}_{p_i, d_j}$ 
  clients from the outside of $B$ at distance in $[d_j;(1+\eps)d_j]$ that are served inside $B$
  and crossing at $p$. Note that only one of the two can be positive;
\item The solution opens $k_B$ of centers open inside $B$,
  including the badly cut
  centers of $L$ inside $B$;
\end{enumerate}

Eventually, we consider the solutions at the root $R$, with the
following set of parameters,
each portal $p$ of $R$ is such that $n_p = 0$, and $k_R = k$. Among all
these solutions, the algorithm output the one with minimum cost.

The base cases of the dynamic program consist of smallest-size cells where at most
one facility can be opened at a given location. The value of each base-case cell
can be obtained by computing the assignment of each $n^{\text{out}}_{p_i, d_j}$ to
the open facility (if there is one facility open, otherwise each $n^{\text{out}}_{p_i, d_j} = 0$),
namely $n^{\text{out}}= \sum_{i} \sum_j n^{\text{out}}_{p_i, d_j} \cdot d_j^p$. In addition, the number of clients
within the cells that are assigned outside should be at most $n^{\text{in}} = \sum_i n^{\text{in}}_{p_i, 0}$ for each portal
$p_i$, and the total number of clients in the cell minus $n^{\text{in}}$ plus $n^{\text{out}}$ should be
at most $\capac$.

Computing the value of a non-base-case cell  $C$ 
\begin{align*}
  (B, \langle (n^{\text{in}}_{p_1, d_1}, n^{\text{in}}_{p_1, d_2}, \ldots, n^{\text{out}}_{p_1, d_{\max}}),\\
  (n^{\text{in}}_{p_2,d_1}, \ldots, n^{\text{out}}_{p_2, d_{\max}}),\ldots,\\
  (n^{\text{in}}_{p_{\rho^{-O(d)}}, d_1},\ldots,n^{\text{out}}_{p_{\rho^{-O(d)}}, d_{\max}})\rangle, k_B),
\end{align*}
of the DP can be done by iterating over all tuples of cells of the DP such that 
each child box $B'$ of $B$ appears in exactly one DP-cell of the tuple, and all DP cells
of the tuples are associated with child boxes of $B$, and taking the tuple that is compatible
and whose sum of values of entry cells is minimized.
The tuple is compatible with cell $C$ if for each $n^{\text{in}}_{p_i,d_1j}$ of the definition
of $C$,
one can assign $n^{\text{in}}_{p_i,d_j}$ clients to each portal
$p_i$ under the constraint that one can assign at most $n^{\text{in}}_{p_u,d_v}$ from a
from each portal $p_u$ of a DP-cell $C'$ of the tuple such that $n^{\text{in}}_{p_u,d_v}$ is
part of the definition of $C'$ and  $d_v + \dist(p_u,p_i) = (1\pm \eps^2/\log n) d_j$.
Moreover the sum of the $k_{B'}$ for each $B'$ of the tuple has to be at most $k_B$.
The verification of the compatibility of an assignment is done through enumeration of all
possibilities.

We show that our dynamic program outputs a solution of cost
at most $(1+\eps) \cost_{\calI_{\calD,L}} \cost(\globalSh)$, where
$\globalSh$ is as defined per the definition of valid solution.

\begin{lemma}
  \label{lem:dplargedim}
  The above dynamic program produces a solution of cost at
  most $(1+\eps) \cost_{\calI_{\calD,L}} \cost(\globalSh)$, where
  $\globalSh$ is as per the definition of valid solution. Moreover, the
  running time is at most $\exp((\rho/\eps)^{-O(d)})$.
\end{lemma}
\begin{proof}
  To prove the approximation gurantee, we need to argue that:
  \begin{enumerate}
    \item Forcing the assignment path of each client to make a
      detour through the closest portal whenever it leaves a region
      in the new instance does not increase the cost
      by a factor of more than $(1+\eps)$; and
    \item Rounding the number of clients  coming through each portals to a power
      of $(1+\eps^2/\log n)$ yields a solution of cost at most $(1+\eps)
      \cost_{\calI_{\calD,L}} \cost(\globalSh)$.
  \end{enumerate}

  We first prove (1).
  Consider instance $\calI_{\calD,L}$. 
  Consider a client $c$ that is not badly cut
  and $\calI'_{\calD,L}$ together with
  the facility $\globalSh(c)$ that serves it in solution $\globalSh$.
  We have that $c$ and $\globalSh(c)$ are separated at a level $u(c)
  \le \log (d  \frac{\log n}{\epsval}) + \log \dist(c,\globalSh(c))+1$,
  by the definition
  of (not) badly cut.
  Thus, consider the solution where, in each level where
  $c$ and $\globalSh(c)$ are
  in different boxes, the path makes a detour to the closest portal
  in the box containing $c$. This incurs a detour of
  $\rho 2^{i+1}$ for each such box of level $i$.
  We have that the total detour for the path is at most
  \begin{align}
    \label{eq:portals}
    \sum_{i=0}^{u(c)} \rho 2^{i+1}
    &\le \sum_{i=0}^{u(c)} \rho 2^{u(c)+1 -i} = \sum_{i=0}^{u(c)} \rho
    \frac{2^{u(c)+1}}{2^i} \\
    &\le \rho 2^{u(c)+2} \nonumber \\
    &\le 16 \rho
    (d \frac{\log n}{\epsval}) \cdot  \dist(c,\globalSh(c))
    \nonumber
  \end{align}

  Setting $\rho = \eps (16d \frac{\log n}{\epsval})^{-1}$ shows
  that the overall detour is at most $\eps \dist(c,\globalSh(c))$.

  Now consider a client $c$ that is badly cut. In instance
  $\calI_{\calD,L}$, we have that $c$ is relocated to the facility $f$ of
  $L$ that serves it in solution $L$. Then two things can happen: if
  $f$ is not badly cut, then in instance $\calI_{\calD,L}$ we have that $c$
  is $c$ and the facility $\globalSh(c)$ that serves it in $\globalSh$
  are separated at a level $u(c)
  \le \log (d  \frac{\log n}{\epsval}) + \log \dist(c,\globalSh(c))+1$.
  Hence, the cost increase for $c$ due to
  enforcing the detours to the closest portals in instance $\calI_{\calD,L}$
  is at most $(1+\eps)$ times the assignment cost of $c$ in solution
  $\globalSh(c)$.
  Otherwise, $f$ is badly cut and so, by definition of $\globalSh$,
  $f$ serves $c$ in $\globalSh$ and so the portals have no effect.

  An immediate induction shows that our dynamic program computes 
  a valid solution $S'$ such that 
  $\cost_{\calI_{\calD,L}}(S') \le
  (1+O(\eps))\cost_{\calI_{\calD,L}}(\globalSh)$. Namely, the cost
  incurred by rounding distances to power of $(1+\eps^2/log n)$ over
  the $O(\log n)$ levels of the dynamic program only incurs an overall
  cost increase of $(1+\eps^2/\log n)^{\log n} = (1+O(\eps^2))$. The
  rest of the dynamic program is exact.
  Thus, Lemma~\ref{lem:valid} implies
  that $\cost(S') \le (1+\eps)\cost(\opt) + \eps\cost(L)$.

  The running time follows from the definition of the dynamic program
  and the choice of the number of portals and the fact that
  $\frac{\text{max distance}}{\text{min distance}} = n^{O(1)}$. The running
  time is
  $\exp((\rho/\eps)^{-O(d)})$.
\end{proof}

The proof of Theorem~\ref{thm:qptas} almost follows from combining
Lemma~\ref{lem:dplargedim} and Lemma~\ref{lem:valid}.
The only thing that remains to be proven is that we can
find a solution $L$
of cost $O(\cost(\opt))$. Unfortunately, nothing better than
an $O(\log n)$-approximation is known. We thus repeat the above
algorithm until we find a solution of cost at most $(1+\eps)\cost(\opt)$.
Namely, we start with $L$ being an $O(\log n)$-approximation.
We then apply the algorithm to obtain a solution $L_1$ of cost
at most $(1+\eps)\cost(\opt) + \eps\cost(L)$ with probability
at least $1-\eps/\log n$; boosting the probability is always possible
by repeating the random step (i.e.: computing a new
randomized split-tree
decomposition) and outputing the best solution among the
different solution computed.
We then apply the algorithm again to $L_1$ and find a solution $L_2$
of cost at most  $(1+\eps)\cost(\opt) + \eps\cost(L_1)$ with probability
at least $1-\eps/\log n$. Repeating this process $s =
O(\log \log n)$ times
yields a solution $L_{s}$
of cost at most $(1+\eps)\cost(\opt)$ with
probability at least $1-\eps$.
We note that similar techniques have been used in~\cite{BateniDHM16}.


\section{A PTAS for Capacitated $k$-Median in $\R^2$ -- Proof of Theorem~\ref{thm:ptas}}
\label{sec:ptas}
We aim at providing a faster algorithm for the Euclidean plane
with running time $2^{\rho^{-1}} \text{poly}(n)$, which
as we will see is $n^{\eps^{-O(1)}}$.

Our algorithm differs from the 
the algorithm described in Section~\ref{sec:qptas} in two ways:
\begin{enumerate}
\item\label{step:dp1}
  the size of the net in each box is $\rho^{-1} =f(\eps) \log n$
  for some computable function $f$.
\item\label{step:dp2} the table entries of the dynamic program
  are smaller: the number $n_p \in [n]$ that is kept for each
  portal $p$ is a power of $(1+\eps)$.
\end{enumerate}

In the case of the plane, the
algorithm uses the randomized quad-tree dissection of Arora.
This also satisfies the properties 1-4 described in Section~\ref{sec:badly}.
Any box of the dissection of level $i$ is a $2^{i+1} \times 2^{i+1}$ square.
We condition on the event that the instance $\calI_{\calD,L}$ is good
which by Lemma~\ref{lem:valid} happens with probability at least
$1-\eps$.

Reducing the size of the net simply follows the standard analysis of Arora et al.~\cite{ARR98}.
We only place portals on the boundaries of boxes and consider solutions that are such that
each path connecting a client $c$ to the facility $f$ that serves it in $\opt$ is forced
to make a detour for each boundary of a box $B$ it crosses.
When using a portal set of size $\rho$, the length of each such
detour is then $\rho \partial B$, where $\partial B$ denotes the perimeter of box $B$ and so less than twice is diameter.

Therefore, the discussion of Section~\ref{sec:qptas} still holds:
from Lemma~\ref{lem:valid} and the definition of a valid instance,
the badly cut clients are not a problem anymore: their
assignment cost is
either 0 (in case their center in $L$ is also badly cut) or
well approximated by portals (c.f. Eq~\ref{eq:portals}). Similarly
for the clients that are not badly cut, their assignment cost
is well approximated by portal (c.f. Eq~\ref{eq:portals} again).
More formally,
let $B(c)$ denotes the set of box that contains $c$ and that do not
contain the facility $f$
that serves $c$ in $\opt$.
Thus, we write that the total detour for using portals
is at most $\sum_{B \in B(c)} \rho 2^{\ell(B(c))+3}$, where $\ell(B(c))$
is the level of box $B(c)$. 
Therefore, the overall detour is at most $16 \eps \dist(c,f)$
since we condition on the event that
$\calI_{\calD,L}$ is a good instance.

This implies an algorithm running in time $\exp(O(f(\eps) \log^2 n))$
through the
following naive implementation. The dynamic program
defines an entry by a tuple
$(B, \langle (n^{\text{in}}_{p_1},n^{\text{out}}_{p_1}),
(n^{\text{in}}_{p_2},n^{\text{out}}_{p_2}), \ldots (n^{\text{in}}_{p_{\rho^{-1}}},n^{\text{out}}_{p_\rho^{-1}})\rangle, k_B)$, where $p_1,\ldots,p_{\rho^{-1}}$ are the
portals of box $B$.
Then, the value of such a table entry is the 
the value of the best valid solution for the clients
in box $B$ under the constraints that:
\begin{enumerate}
\item for each portal $p$ of box $B$, $n^{\text{in}}_p$ clients 
  coming from the inside of $B$ and that are assigned to a
  center outside $B$ and crossing at $p$, and $n^{\text{out}}_p$ clients
  outside $B$ are
  assigned to a center inside $B$\footnote{Note that in an optimal
    assignment at most one of $n^{\text{in}}_p,n^{\text{out}}_p$ is
    non-negative but this does not impact the asymptotic running time.}
  and are crossing at $p$.
\item the solution opens at most $k_B$ centers inside $B$.
\end{enumerate}
The running time of this algorithm is $\exp{O(\rho^{-1}\log n)}$.
However, we aim at getting a truly polynomial-time algorithm. We thus
makes the following optimization.

We now show how to speed this up. For each box $B$, we pick
an arbitrary portal of $p$, $p^*_B$ -- when $B$ is clear
from the context we simply refer to it by $p^*$.
Since portals are
placed on the perimeter of a square, we can impose an ordering from $p^*$
in clockwise manner: $p^* = p_0$, $p_1$ is the next portal
on the boundary after $p^*$ in the clockwise order, and so on.

Our algorithm only considers
table entries $(B, \langle (n^{\text{in}}_{p_1},n^{\text{out}}_{p_1}),
(n^{\text{in}}_{p_2},n^{\text{out}}_{p_2}), \ldots
(n^{\text{in}}_{p_{\rho^{-1}}},n^{\text{out}}_{p_\rho^{-1}})\rangle, k_B)$
that are such that
$\eps^2 n_{p_i} \le n_{p_{i+1}} \le n_{p_i}/\eps^2$. We say that such
a table entry satisfies constraint 1. A solution satisfies constraint
1 if it can be described by a table entry that satisfies constraint 1
for each box $B$.

Moreover, our algorithm only considers
table entries $(B, \langle (n^{\text{in}}_{p_1},n^{\text{out}}_{p_1}),
(n^{\text{in}}_{p_2},n^{\text{out}}_{p_2}), \ldots
(n^{\text{in}}_{p_{\rho^{-1}}},n^{\text{out}}_{p_\rho^{-1}})\rangle, k_B)$
that satisfy constraint 1 and where each $n^{\text{in}}_{p}$ and
$n^{\text{out}}_p$ is a multiple of $(1+\eps_0)$, for some
$\eps_0$ that will be chosen later, except for
$p^*_B$ for which the value is still
in $[n]$.  We refer to this constraint imposed
on solutions as constraint 2: namely, a solution that can be described
through by such table entries is a solution that satisfies constraint 2.
Similarly, a table entry is said to satisfy constraint 2 if each of the
$(n^{\text{in}}_{p_i},n^{\text{out}}_{p_i})$ are power of $(1+\eps_0)$,
except for $p^*$.

\begin{claim}
  The running time of the dynamic programming algorithm
  that only considers entries that satisfy constraint 2 (and so also
  constraint 1) is $ n^{O(1)}\exp(\rho^{-1}\log((\eps\eps_0)^{-1}))$.
\end{claim}
\begin{proof}
  Fix a given box $B$. There are $n^2$ possible values of
  $(n^{\text{in}}_{p^*},n^{\text{out}}_{p^*})$ and $k$ possible values for
  $k_B$. Then, for the $(n^{\text{in}}_{p_i},n^{\text{out}}_{p_i})$, each
  value is in the range $\eps^2 n_{p_{i-1}} \le n_{p_{i}}
  \le n_{p_{i-1}}/\eps^2$ and a power of $(1+\eps_0)$ and so there
  are at most $O(\log(1/\eps)/\eps_0)$ possibilities.
  This shows that the total number of entries
  $(B, \langle (n^{\text{in}}_{p_1},n^{\text{out}}_{p_1}),
  (n^{\text{in}}_{p_2},n^{\text{out}}_{p_2}), \ldots
  (n^{\text{in}}_{p_{\rho^{-1}}},n^{\text{out}}_{p_\rho^{-1}})\rangle, k_B)$
  that satisfy constraint 2 is at most
  $(\log(1/\eps)/\eps_0)^{\rho^{-1}} n^3$.
  It follows that the running time of the algorithm is at most
  $\exp{\rho^{-1} \log((\eps\eps_0)^{-1})} n^{O(1)}$.
\end{proof}

To conclude, we now need to prove that there exists a near-optimal
solution that satisfies constraint 2, this is done in the following
claim. The proof of the theorem then follows immediately.

\begin{claim}
  \label{claim:r2}
  If $\eps_0 \le \eps^5$, 
  there exists a $(1+O(\eps))$-approximate
  solution that satisfies constraint 2 and that
  forces the assignment path of
  each client to leave a box only at its portals.
\end{claim}
\begin{proof}
  We first observe that our analysis of the detours paid by each client
  to reach the facility it is assigned to in an optimal solution
  is as follows.
  Each client $c$ going through a portal of a box $B$ of level $i$
  pays a detour of up to $\alpha \rho 2^i$ in the worst-case,
  where $\alpha$ is a large enough constant. In the case where each
  client pays a worst-case detour, the cost of the best solution
  that forces the assignment path of
  each client to leave a box only at its portals remains
  at most $(1+\eps)\opt$ (for the right choice of $\rho$, as
  a function of $\alpha$).

  Thus, we will give each client $c$ an additional \emph{budget} of
  $\alpha \rho 2^i$ for level $i$. The best solution 
  that forces the assignment path of each
  client to leave a box only at its portals plus
  pays the budget is at most $(1+2\eps)\opt$.
  Namely, any $(1+\eps)$-approximate
  solution such that each client pays its budget
  remains a $(1+O(\eps))$-approximation to an
  optimal solution.

  We first consider the cost of making a solution satisfy constraint 1.
  We first claim that for two portals $p,p'$
  that are consecutive on the boundary of a box $B$ of level $x$,
  there exists a near-optimal
  solution such that $\eps^2 n_{p'} \le n_p \le n_{p'}/\eps^2$
  (and each client its assignment path leaves a box only at its portals).
  Indeed, consider the optimal solution that
  forces the assignment path of
  each client to leave a box only at its portals,
  and the number of clients
  assigned to $n_p$ and $n_{p'}$
  and assume that $\eps^2 n_{p'} > n_p$. Then, consider
  forcing $\eps^2 n_{p'}$ clients that go to $n_{p'}$ to make an
  extra-detour to
  $n_p$. The length of this detour is at most $\rho 2^{x+1}$.
  Hence, the extra-cost is
  $n_{p'} \rho 2^{x+1}$. Now, observe that this is
  at most $\eps^2$ times the budget
  of all the clients going to $n_{p'}$. A similar argument holds
  in the case where
  $n_p > n_{p'}/\eps^2$. Thus, consider the boundary of $B$ and
  an arbitrary portal
  $p_0$ of $B$. Let $p_0,p_1,\ldots$ be the portals in the order given
  by a clockwise
  walk on the boundary of $B$ starting at $p_0$.
Visit the portals in that order and ensure
that $\eps^2 p_{i}  \le p_{i+1}$ for all $i$ using the above transformation iteratively.
What is the overall cost? observe that the clients $n(p_i)$
that are initially going through portal
$p_i$ may now be assigned to a portal $p_j$ where $j$ is much larger than $i$.
What is the total cost for this? We have that at each portal at most an $\eps$ fraction
of the clients can be moved again. Thus, the total extra cost for the clients of $n(p_i)$ is
at most $n_{p_i} \eps^{2(j-i)} (j-i) \rho 2^{x+1}$.
Summing up over all portals, this yields a geometric sum and so is
at most
$O(n_{p_i} \eps \rho 2^{x+1})$ which is less than the sum of the budgets at level $i$ of the clients
going at $n(p_i)$ (for a large enough choice of $\alpha$).
Then visit the portals in the reverse order and proceed identically to ensure
that $p_{i}/\eps^2   \le p_{i+1}$. This choices that
there exists a
near-optimal solution that forces the assignment path of each
client to leave a box only at its portals and that satisfies constraint
1.

We now turn to prove that given such a solution, there exists a
solution that forces the assignment path of each
client to leave a box only at its portals and that satisfies constraint
2.
We show that that except for one portal denoted by $p^*$,
the numbers $n_p$ could be approximated to power of $(1+\eps^5)$
in the following way.
We again consider the portals in clockwise order, starting from $p_0 = p^*$.
The initial number of clients $n^0_{p_i}$ assigned to portal $p_i$ is the one prescribed
after the above transformation.
For the $i$th portal $p_i$, $i>0$, let $n_{p_i}$ number of clients assigned to
$p_i$ when the procedure visits $p_i$. Let $\widetilde{n_{p_i}}$ be the power of $(1+\eps^5)$
that is the closest to $n_{p_i}$ and smaller than  $n_{p_i}$. We reassign $n_{p_i} - \widetilde{n_{p_i}} \le
\eps^5 n_{p_i}$ clients of $n(p_i)$ to $p_{i+1}$.

By doing so iteratively, we end up with an assignment where, except for $p^*$
which may receive from $p_{\rho-1}$ and not give to any other portal,
$n_{p_i}$ is a power of $(1+\eps^5)$.
We now bound the cost of the reassignment. 
We first show that $n_{p_i} \le (1+\eps^2) n^0_{p_i}$. This is true for $i\in\{1,2\}$ since
$n^0_{p_1} \le n^0_{p_2}/\eps$ and the total number of clients moved from $p_1$ to $p_2$
is at most $\eps^5 n^0_{p_1}$ and so $n_{p_2} \le (1+\eps^3) n^0_{p_2}$.
We assume that this is true up to $p_{i-1}$ and show that it holds for $p_i$.
The number of clients received by $p_i$ is thus at most $\eps^5 (1+\eps^2) n^0_{p_{i-1}}$
by the inductive hypothesis. This is at most $\eps^5 (1+\eps^2) n^0_{p_{i}}/\eps^2
\le \eps^3 (1+\eps^2) n^0_{p_{i}} \le \eps^2  n^0_{p_{i}}$ for any $\eps \le 1/2$.

It follows that the clients of $n(p_i)$ that are reassigned to $p_{i+1}$ can be chosen
from the clients that are initially assigned to $p_i$ and so each client that
is assigned
to portal $p_i$ (in the solution satisfying constraint 1)
is now assigned either to portal $p_i$ or to portal $p_{i+1}$.
It follows that the extra cost is at most the total budget of level $i$
for the clients going to the portals and the claim follows.
\end{proof}

\bibliographystyle{abbrv}
\bibliography{facilitylocationptas.bib}
\end{document}